\newcommand{\myparagraph}[1]{\vspace{0.5mm} \noindent \textbf{#1}.}
\newcommand{\myparagraphunder}[1]{\vspace{0.5mm} \noindent \underline{#1}.}
\newcommand{\myblue}[1]{{\color{black} #1}\xspace}
\newtheorem{definition}{Definition}[section]
\newtheorem{example}{Example}[section]
\begin{document}
\pagestyle{plain}
\pagenumbering{arabic}

\title{Deep Overlapping Community Search via Subspace Embedding} 







\author{Qing Sima$^{1}$, Jianke Yu$^{2}$, Xiaoyang Wang$^{1}$, Wenjie Zhang$^{1}$, Ying Zhang$^{2}$, Xuemin Lin$^{3}$}
\affiliation{\vspace{2mm}
    \institution{$^{1}$University of New South Wales, Sydney, Australia \\
    $^{2}$University of Technology Sydney, Sydney, Australia 
    $^{3}$Shanghai Jiao Tong University, Shanghai, China}
}
\email{
    {q.sima, xiaoyang.wang1, wenjie.zhang}@unsw.edu.au, 
    {jianke.yu@student, ying.zhang}@uts.edu.au, xuemin.lin@sjtu.edu.cn}

\begin{abstract}
Overlapping Community Search (OCS) identifies nodes that interact with multiple communities based on a specified query.
Existing community search approaches fall into two categories: algorithm-based models and \myblue{ML-based} models.
Despite the long-standing focus on this topic within the database domain, current solutions face two major limitations: 1) Both approaches fail to address personalized user requirements in OCS, consistently returning the same set of nodes for a given query regardless of user differences.
2) Existing \myblue{ML-based} CS models suffer from severe training efficiency issues. 
In this paper, we formally redefine the problem of OCS.  
By analyzing the gaps in both types of approaches, we then propose a general solution for OCS named \textbf{\underline{S}}parse \textbf{\underline{S}}ubspace \textbf{\underline{F}}ilter (SSF), which can extend any \myblue{ML-based} CS model to enable personalized search in overlapping structures.
To overcome the efficiency issue in the current models, we introduce \textbf{\underline{S}}implified \textbf{\underline{M}}ulti-hop Attention \textbf{\underline{N}}etworks (SMN), a lightweight yet effective community search model with larger receptive fields. 
To the best of our knowledge, this is the first \myblue{ML-based} study of overlapping community search. 
Extensive experiments validate the superior performance of SMN within the SSF pipeline, achieving a $13.73\%$ improvement in F1-Score and up to $3$ orders of magnitude acceleration in model efficiency compared to state-of-the-art approaches.
\end{abstract}
\maketitle

\section{Introduction}

\label{Introduction}
\begin{figure}[t]
  \centering
  \includegraphics[width=0.45\textwidth]{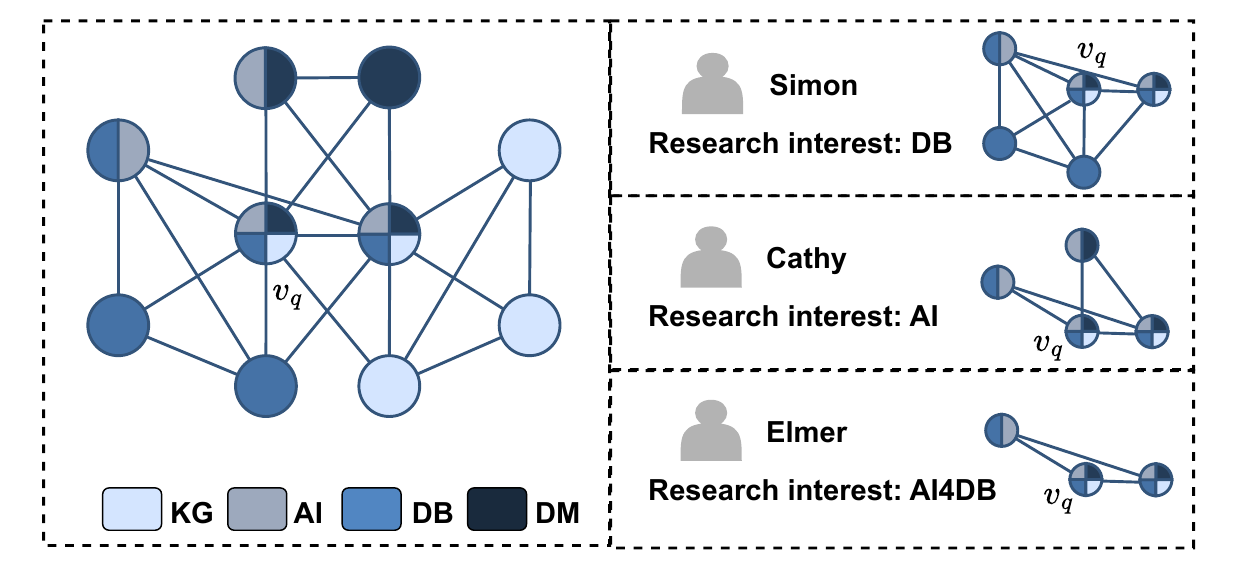}
  \vspace{-1mm}
  \caption{Different users are expecting different communities given the same query node}
  \label{fig:user}
  \vspace{-1mm}
\end{figure}

Identifying a closely interrelated community based on a query node is a long-standing focus within the database domain, facilitating various applications, including fraud detection~\cite{fang2020survey,coclep} and recommender systems~\cite{temporal_cs_2023,ics-gnn}. 
Existing Community Search (CS) models can be categorized into algorithm-based and \myblue{ML-based} approaches. 
Algorithm-based models define a community as a cohesive group of nodes~\cite{k-core,spatialcs_2015,spatialcs_2018}, while \myblue{ML-based} approaches are task-driven which define communities using labels, or node types~\cite{ics-gnn,temporal_cs_2023,meta_cs_2023,wang2024efficient}. 
Leveraging predictive capabilities, \myblue{ML-based} models identify a set of nodes closely related to the query sharing the same label~\cite{ics-gnn}.
However, these models cannot be extended to Overlapping Community Search (OCS). 
Therefore, this paper aims to develop efficient \myblue{ML-based} community search models for overlapping community structures, addressing the limitations of existing methods.

Overlapping community structure allows each node to interact with multiple communities, each exhibiting distinct characteristics such as sizes, levels of cohesiveness, and attribute patterns~\cite{2016_topk_ocs,k-clique,k-clique2018}. 
\myblue{This complexity raises a key challenge in OCS: how do we prioritize or rank communities when a query node belongs to multiple at once? 
Given the diversity in user interests, it becomes crucial to offer users the flexibility to personalize their search by selecting which target communities they wish to focus on. 
Existing methods often fall short by returning the same community for a given query, ignoring the differences in user preferences. 
Therefore, enabling the interactive selection of target communities is essential to ensuring the results align with user requirements.}

\autoref{fig:user} shows a toy example of citation networks, where nodes represent papers and edges denote citation relationships. 
The colors on the nodes indicate community affiliations, with multi-colored nodes representing papers overlapping multiple domains.
Community search can serve as a tool that recommends related papers based on a user's current reading, i.e., the query node.
The right part of the figure demonstrates how different users with distinct research interests seek personalized community recommendations based on the same query node $v_q$.
Simon, Cathy, and Elmer have different research interests: DB, AI, and AI4DB, respectively. 
Each subgraph highlights a set of nodes belonging to a domain, represented by the underlying color.
For example, users interested in the DB domain, like Simon, will specifically search for blue nodes.
Particularly, Elmer's interest in AI4DB represents an intersection between multiple communities, requiring the model to identify nodes that fall into both target communities: AI and DB. 
Hence, it is essential to allow users to select target communities addressing their personalized needs.  
Compared to disjoint community search~\cite{ics-gnn,qdgnn,coclep}, Overlapping Community Search (OCS) is a more challenging problem and has not received much attention in existing \myblue{ML-based} literature.
The study of OCS yields potential benefits across various applications.
For example, it enables the precise extraction of fraudulent entities from multiple communities~\cite{fraud_jianke_2023}, the discovery of literature in the cross-domain~\cite{citation_cross_2023}, and the recommendation of products to the most valuable community~\cite{recommond_2023}. 

\myparagraph{Existing solutions}  
Popular algorithm-based approaches employ different structural constraints to measure subgraph cohesiveness, such as $k$-core~\cite{k-core,spatialcs_2015,spatialcs_2018}, $k$-truss~\cite{k-truss2015,k-truss2017, k-truss2022}, and $k$-clique~\cite{k-clique,k-clique2017,k-clique2018}. 
Algorithm-based overlapping community search models are designed to discover multiple subgraphs containing the query node, each meeting a certain level of cohesiveness requirements~\cite{k-clique,2016_topk_ocs,k-clique2018}. 
However, these models struggle to assign predictive labels like ``DB'' to each node, assuming all returned communities carry the same semantics~\cite{ics-gnn}. 
In addition, these approaches only capture linear attribute-wise patterns and tend to measure structural cohesiveness and attribute homogeneity independently~\cite{qdgnn}. 

Comparatively, \myblue{ML-based} community search models are task-orientated and identify communities by prior knowledge learned from limited labels~\cite{ics-gnn,qdgnn,coclep, wang2024neural}. 
Current models are mainly built with Graph Neural Networks (GNNs) to learn node representations for online searching~\cite{gcn2016,gat_2017,graphsage,gin_2018}, such as ICS-GNN~\cite{ics-gnn}, QDGNN~\cite{qdgnn} and COCLEP~\cite{coclep}. 
\myblue{These models have two main phases, including offline training and online searching.
The offline training stage focuses on learning representative node embeddings from a subset of labeled data.}
Online searching algorithms utilize pairwise GNN scores (probabilities)~\cite{ics-gnn,qdgnn} or node similarity~\cite{coclep} against the query to identify communities.
However, current \myblue{ML-based} approaches primarily focus on disjoint community search, ignoring the nature that nodes tend to demonstrate various community affiliations. 
Therefore, two main motivations exist for designing an efficient and effective \myblue{ML-based} approach for OCS.

\begin{figure}[t]
  \vspace{-2mm}
  \centering
  \includegraphics[width=0.5\textwidth]{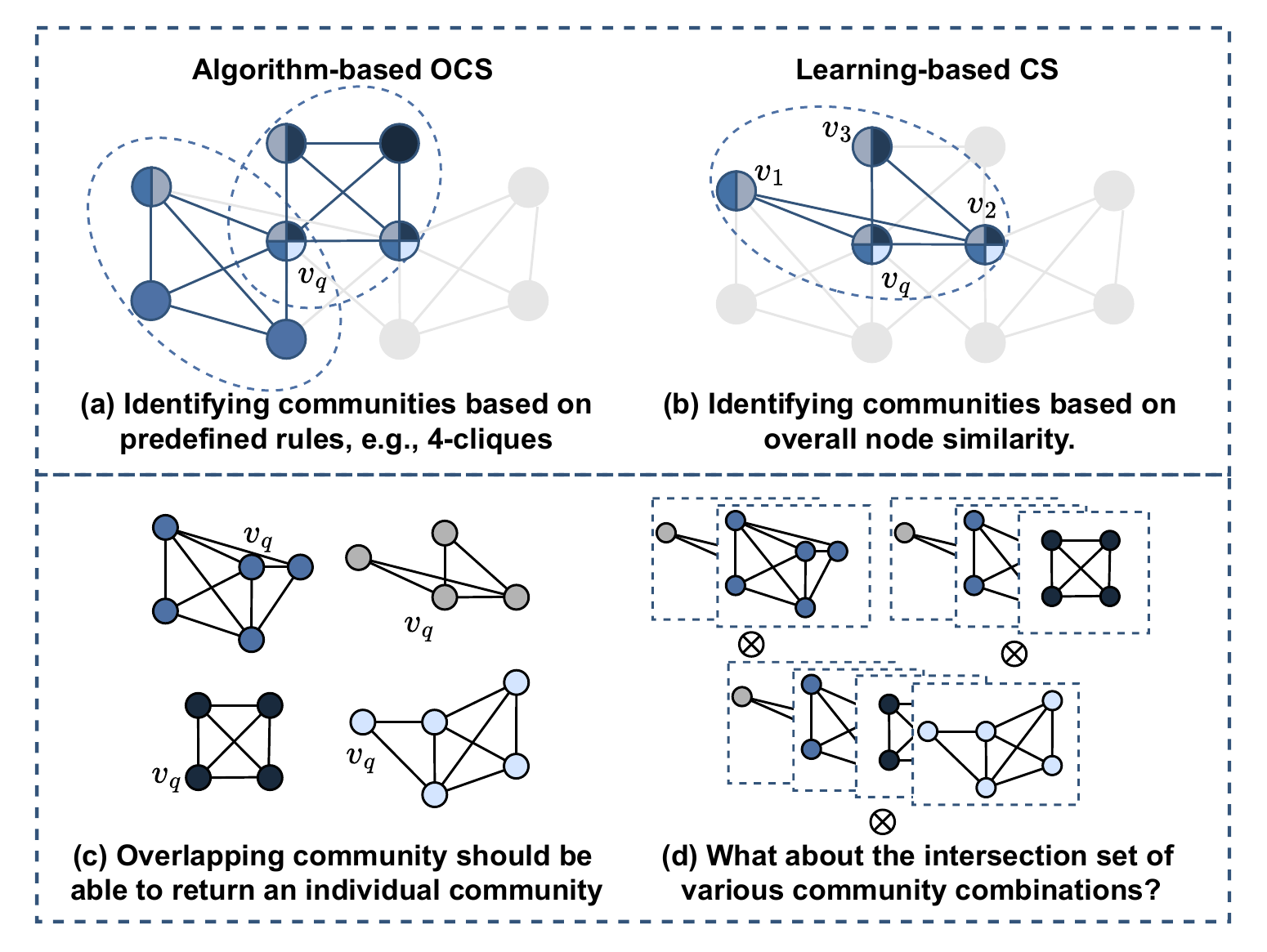}
  \vspace{-5mm}
  \caption{Challenges in existing approaches and user expectation given datasets with overlapping communities}
  \label{fig:challenges}
\vspace{-2mm}
\end{figure}

\myparagraph{Motivation 1} \textit{How to search for customized communities under overlapping community structures?} 
Although extensive work has been conducted, the aforementioned methods failed to address specified user requirements.
Given the same query, both types of models consistently return the same set of nodes for different users~\cite{ocs_kim2022dmcs,ics-gnn}.
\autoref{fig:challenges}(a) demonstrates the community identification process of an algorithm-based approach. 
\myblue{Given the query $v_q$ and a cohesiveness constraint, such as 4-clique~\cite{k-clique}, algorithm-based OCS models will return the highlighted communities by identifying two distinct subgraphs defined by 4-clique.}
This approach assumes that all cliques in the returned set share the same semantics, leading to two issues.
First, the models fail to distinguish nodes from different cliques, requiring manual intervention.
Additionally, cohesive subgraphs are unaware of downstream labels, and since not all nodes in the clique share the same label, the models struggle to exclude irrelevant nodes.

\myblue{ML-based CS models measure the overall node similarity guided by node labels~\cite{ics-gnn,qdgnn,coclep}. 
As illustrated in \autoref{fig:challenges}(b), given the query $v_q$, it tends to return nodes that exhibit higher similarity levels ($v_1, v_2, v_3$).
It is observable that the returned nodes will have closer embeddings by sharing at least two common communities with the query (each returned node shares at least two colors). }
To control the size of the returned set, these models utilize a threshold to gate the level of similarity~\cite{qdgnn, coclep}.
In this example, a relaxed threshold might include the entire graph, as all nodes share at least one label with the query shown in \autoref{fig:user}. 
Thus, both algorithm-based OCS and \myblue{ML-based} CS failed to identify user-specified communities.

\myparagraph{Motivation 2} \textit{How to design an efficient and effective model framework for \myblue{ML-based} OCS?}
Current \myblue{ML-based} models suffer from severe training efficiency issues.
COCLEP~\cite{coclep} uses graph partition techniques to improve the model training efficiency.
However, this approach has drawbacks such as loss of global context, boundary issues, and data imbalance.
Moreover, the complexity of overlapping community structures necessitates a model that favors high-order awareness, requiring it to gather messages from large receptive fields.
Therefore, designing an efficient model framework with larger receptive fields remains a significant challenge.

\myparagraph{Define ML-based OCS} 
By identifying challenges when applying the existing model to overlapping communities, we redefine the OCS problem within the deep learning context. 
A well-trained OCS model should account for each user's specific interests, guiding the community search to identify personalized communities.
\myblue{Therefore, an ML-based OCS model needs first to predict the potential labels for the query node.
It then allows users to interactively select the relevant labels as the target, thereby restricting the search to nodes within the specified target community to fulfill user requirements.}

In contrast to the expected community identified by existing approaches illustrated in \autoref{fig:challenges}(a) and 2(b), a qualified OCS model should be able to effectively retrieve those four pure communities individually, as depicted in \autoref{fig:challenges}(c). 
For example, the model should only return the blue node-set representing papers related to the DB domain, i.e., the target community. 
\myblue{This further raises a more challenging question, named Overlapping Communities Intersection Search (OCIS), illustrated in \autoref{fig:challenges}(d). 
What if a user is interested in multiple domains?

For example,  in citation networks (e.g., \autoref{fig:user}), a researcher might seek papers that lie at the intersection of multiple prominent fields, such as AI and DB. 
Given the vast number of papers in both fields, retrieving all papers from each domain would be overwhelming. 
The target is to narrow the search to find papers that cover both fields, meeting the researcher’s need for cross-domain insights and resulting in a more focused and relevant set of results.
Similarly, in social networks, community intersection search helps identify users with overlapping interests or affiliations. 
For instance, if a user engages in both technology and entrepreneurship communities, they can serve as a query to find others with similar profiles. 
When explicit labels are not available, the model can use predicted memberships from shared connections and attributes to identify relevant users, enabling more personalized recommendations and deeper insights into the network.


To solve OCIS, a possible brute-force approach is first to identify all the target communities separately and then calculate their intersections.
However, this transforms the task into an exhaustive enumeration of communities, which is impractical due to the high computational cost.}
This paper proposes a general solution to tackle the OCS problem.
The technique developed is also effective for the OCIS scenario without the requirement of enumerating all communities, reducing the computational overhead.

\myparagraph{Our solution} 
In this paper, we introduce a subspace community embedding technique called Sparse Subspace Filter (SSF) to tackle the challenge of identifying and segregating nodes with overlapping community affiliations. 
SSF is a general technique that can extend any existing \myblue{ML-based} model primarily built for the disjoint community search problem to OCS.
Moreover, to address the limitations in previous approaches, we replace the existing model with a novel framework named Simplified Multi-hop Attention Network (SMN), which significantly improves the model training speed while preserving high-order awareness. 

\myparagraphunder{Sparse Subspace Filter} 
SSF aims to learn a sparse matrix, representing each community by a sparse embedding.
This technique enables node embeddings to fall into multiple subspaces simultaneously, effectively identifying the target set under overlapping community structures. 
When searching for a target community, the learned sparse community embeddings are used as a basis vector to project nodes into the underlying subspace. 
The community search is then conducted exclusively within the target subspace. 
The proposed technique also extends any \myblue{ML-based} CS model to the scenario with multiple target communities, using the union subspace to represent the intersection of communities.  
SSF efficiently identifies the intersection between communities without the need to enumerate the entire graph for community affiliations.

\myparagraphunder{Lightweight model framework, SMN} 
To address the challenges in existing \myblue{ML-based} CS models, we propose a novel model named SMN. 
By proving the training inefficiency in the popular graph-query frameworks, SMN adopts a simplified model structure to alleviate the burden.
Moreover, as an OCS model needs a larger receptive field to capture high-order patterns, SMN uses an advanced hop-wise attention mechanism to cover higher-hop neighborhoods while preventing the model from oversmoothing. 

\myparagraph{Contributions} The main contributions of this paper are summarized as follows:
\begin{itemize}
    \item To the best of our knowledge, we are the first to investigate the problem of overlapping community search in the \myblue{ML-based} scenario. 
   
    \item 
    A general solution, named SSF, is then proposed, which is effective in finding a pure community as well as handling the intersection scenario. 
    
    \item 
    Moreover, we introduce a Simplified Multi-hop Attention Network (SMN), which is efficient in model training while capturing high-order patterns.
    
    \item
    Extensive experiments on $9$ overlapping and $4$ disjoint community datasets show that our model can achieve an average F1-Score improvement over state-of-the-art methods of $13.73\%$ and $7.62\%$, respectively. Additionally, our approach enhances model training efficiency by $3$ orders of magnitude and online query efficiency by $2$ orders of magnitude.
\end{itemize}


\section{related work} \label{related work} 


\myblue{\myparagraph{Algorithm-based community search} The community search problem is widely studied in the literature and can find many applications.
Different cohesiveness metrics are leveraged, such as $k$-core~\cite{k-core,spatialcs_2015,tan2023higher}, $k$-truss~\cite{k-truss2022,k-truss2015,k-truss2017}, and $k$-clique~\cite{k-clique,k-clique2017,k-clique2018}, which efficiently identify communities based on the graph structure.
Moreover, researchers conduct studies on attributed graphs and extend their analysis by incorporating attribute constraints alongside structural considerations to identify a set of nodes with similar attributes~\cite{attributed,attributed2,attributed3,attributed4}. 
Additionally, several studies have focused on discovering communities that contain multiple query nodes~\cite{k-core,k-truss2015,k-truss,k-truss2022}.
Given a set of query nodes, the studies aim to find a densely connected subgraph that contains all the query nodes. 
For example, CTC~\cite{k-truss2015} and FirmTruss~\cite{k-truss2022} are designed to search for the community including all the query nodes while satisfying different constraints, e.g., closest truss and firm truss.  
These studies are orthogonal to our research, focusing on overlapping communities and personalized community search.
Algorithm-based OCS~\cite{k-clique,2016_topk_ocs,k-clique2018} enable the query node to possess multiple community affiliations with equivalent levels of cohesiveness, such as being part of two subgraphs that fulfill $k$-clique constraints.
However, these models often return all communities containing the query nodes without the ability to focus on a specific community. 
Nonetheless, the lack of label awareness hampers these models' capacity to identify and separate nodes from distinct communities.}

\myblue{\myparagraph{GNN-based community search} GNN and its variants have achieved considerable success in graph analytic tasks, including node classification~\cite{gcn2016,graphsage} and subgraph mining~\cite{gog_2022,neursc_2022}. 
The GNN model learns from predefined ground truth, effectively capturing patterns from node attributes while considering diverse graph structures simultaneously.
In addition, advanced models have been introduced to improve model expressiveness and efficiency~\cite{gat_2017, gin_2018, sgc_2019}.
Recently, GNN-based community search models have attracted increasing attention due to their flexible structure constraints and expressive power.
These models can distinguish nodes from different communities by balancing the contribution from both the topological structure and the nodes' attributes. 
Deep CS models are trained using prior knowledge, making their assumptions more realistic than traditional approaches. 
A community is identified by a group of nodes sharing similar patterns in attributes and topological structures. 
ICS-GNN~\cite{ics-gnn} introduces an online deep community search model using a vanilla GCN model. 
The model is transductive, conducting training and online querying within the identified candidate subgraph. 
QDGNN~\cite{qdgnn} employs an offline setting by training the model on a fixed training set and inferring the model onto the unseen test set. 
The model extends to an attributed community search by adopting an attribute encoder to identify a group of nodes that contain a set of attributes. 
ALICE~\cite{wang2024neural} focuses on attributed community search by combining a candidate subgraph extraction phase using density sketch modularity.
The model follows the query-graph encoder frameworks and adopts a cross-attention encoder to control the interaction. 
COCLEP~\cite{coclep} follows the framework of QDGNN and conducts semi-supervised training by leveraging contrastive learning techniques. 
The model uses a hypergraph as an augmented graph and propagates information using GCN and Hyper GNN~\cite{hgnn2019}. 
However, current models have struggled to adapt to the overlapping community search, often encountering issues with oversmoothing and slow training.}
\section{Preliminaries} \label{Preliminaries}


\subsection{Problem Definition}

Let $G=({V,E})$ be an undirected graph with a set ${V}$ of nodes and a set ${E}$ of edges. 
Let $n =  |{V}|$ and $m = |{E}|$ be the number of nodes and edges, respectively. 
Given a node $u \in {V}$, ${N}(u)=\{v|(u, v) \in {E} \}$ is the neighbor set of $u$. 
The adjacency matrix of $G$ is denoted as $\bm{A} \in \{0, 1\}^{n\times n}$, where $\bm{A}_{i,j} = 1 \text{, if } (v_i, v_j) \in {E}$, otherwise $\bm{A}_{i,j} = 0$. 
$\bm{X} = \{ \bm{x}_1,$ $\bm{x}_2,...,\bm{x}_n \}$ is the set of node features and $\bm{x}_i$ represents the node features of $v_i$. 
Given a query node $q$, the CS problem aims to find a $k$-sized set of nodes containing the query from $G$ while maximizing the GNN score or node similarity against the query~\cite{ics-gnn,qdgnn,coclep}. 
Under overlapping community structures, each node $u$ belongs to more than one community, i.e., $u\in {\mathcal{C}}_u = \{{C}^{z_1}_u,{C}^{z_2}_u, ..., {C}^{z_i}_u\}$, where ${\mathcal{C}}_u$ is the set of communities contains $q$, and $z_i \in Z$ is the label of a community. 
Users are allowed to select a target community label ${t} \in Z$ to guide the community search.
Following the existing definition of community search in \myblue{ML-based} models~\cite{ics-gnn,qdgnn,coclep}, we define OCS as below:

\begin{definition}[Overlapping Community Search, OCS]
    \label{def:1}
     Given a graph $G$, a query node $q$, a community size $k$, and a target community label ${t} \in Z$, OCS aims to identify a $k$-sized query-dependent group of nodes $V_c$ that are closely intra-related.
     This group satisfy $V_c \subseteq {C}^{t}_q$ and $|V_c| = k$, where ${C}^{t}_q$ is the true community associated with the target label $t$. 
     
\end{definition}

Under this definition, the user is only interested in a single community, i.e., the target community.
However, as discussed, due to the complexity of overlapping structures, describing the desired group using a single target community might not be adequate. 
Defining a more refined community by considering the intersection of multiple target groups is often preferable. 
Therefore, we extend OCS by introducing the following definition to enhance flexibility.

\begin{definition}[Overlapping Communities Intersection Search, OCIS]
    \label{def:2}
     Given a graph $G$, a query node $q$, a community size $k$, and multiple target community labels $\bm{T_q} = \{t_1,t_2,...,t_i\} \in Z$, OCIS aims to search for the user-specified intersection set $V_c'$ of size $k$ such that $V_c'  \subseteq \hat{\bm{C}}_q $, where $\hat{\bm{C}}_q = \bm{C}^{t_1}_q \cap \bm{C}^{t_2}_q \cap ... \cap \bm{C}^{t_i}_q$
\end{definition}

Within this definition, the intersection of multiple communities represents a refined community that is valuable to end users. 
It is worth mentioning that employing a brute-force approach, which involves enumerating all nodes for community prediction and then joining multiple communities to determine the intersection, is an impractical strategy. 
Therefore, we aim to efficiently identify the intersection without enumerating the dataset. 
\myblue{In the literature, some studies have focused on discovering communities that contain multiple query nodes, e.g.,~\cite{qdgnn,k-truss2015}. 
In datasets with overlapping labels, this problem can transition into a single-target or multi-target community search. 
If the input nodes share one or more community memberships, those communities become the targets. 
In cases where no common community exists, techniques such as majority voting can be used to identify the target community. 
As a result, the proposed OCS and OCIS methods can be seamlessly extended to scenarios involving multiple query nodes.}


\myblue{In this paper, the overlapping community search task operates in a semi-supervised framework. 
Commencing with a graph represented as $G({V,E})$, the model is trained on a small fraction of the dataset (10\% or less). 
Given a graph $G$, we aim to design a model and a search algorithm to handle both OCS and OCIS. The model should be efficient while capable of handling high-order patterns. }

\subsection{Graph Convolutional Network}
Graph Convolutional Network (GCN)~\cite{gcn2016} is the most commonly employed variant of GNN, leveraging a low-pass filter (the first-order adjacency matrix) to gather information solely from its neighbors rather than all local nodes. 
The propagation process is represented as \autoref{eq:1}:
\begin{equation} \label{eq:1}
\bm{H}^{(l+1)} = \sigma(\bar{\bm{D}}^{-\frac{1}{2}}\bar{\bm{A}}\bar{\bm{D}}^{-\frac{1}{2}}\bm{H}^l\bm{W}^l),
\end{equation}
\noindent where $\bm{H}^l$ is the hidden state from the $l$ layer, $\bar{\bm{D}}$ is the normalized degree matrix which is a diagonal matrix of node degree, $\bar{\bm{A}}$ is the adjacency matrix with self-loop, $\bm{W}^l$ is the learnable weight matrix and $\sigma$ is an activation function.
$\bar{\bm{A}}\bm{H}^l$ demonstrates how a node aggregates information from its one-hop neighbors.
The activation functions add non-linearity between layers and prevent multiple linear functions from collapsing into a single one. 
Common activation functions include $\text{ReLU}(\cdot)$, $\text{sigmoid}(\cdot)$, and $\text{LeakyReLU}(\cdot)$. 
Loss is computed by comparing the model output with the ground truth and using backpropagation to update model parameters iteratively. 

In contrast to other deep learning models, a deeper GCN does not enhance its expressiveness. 
With each additional layer, the receptive fields of the GCN expand by one hop.
Deeper models lead nodes to aggregate information from the entire graph, diminishing its ability to distinguish nodes, known as oversmoothing~\cite{deepgcns}.

\section{Subspace Community Embedding: A General Solution for OCS}\label{SSF}

This section introduces the subspace community embedding technique, a general solution extending any \myblue{ML-based} CS model to OCS.
We then propose the SMN in \Cref{sec:SMN} as the backbone model to solve existing models' challenges. 

Sparsity plays a crucial role in enhancing machine learning models across various applications.
It benefits various areas such as subspace clustering~\cite{ssc_2013_subspace,lrr_2012_subspace,kssc_2014_subsapce, deepssc_2017,efficient_dssc}, sparse training~\cite{sparss_nn_2017, sparse_training_2022}, and sparse feature selection~\cite{lemhadri2021lassonet,sparse_feature_2021}.  
Inspired by this concept, we introduce a subspace embedding technique named Sparse Subspace Filter (SSF).
SSF trains node embeddings to align closely with their corresponding community embeddings, minimizing Euclidean and cosine distances within the subspace.
This method effectively approximates community representations, enhancing the model's accuracy and relevance in OCS.
The sparse subspace filter is initialized as a trainable matrix with $(s, c)$ dimensions, where $s$ is the dimension of the output embeddings, and $c$ denotes the number of communities. 
In the following, we detail our approach by answering the following questions.

\noindent{\textbf{What roles does SSF play in our model?}} SSF plays two roles in the model, including a filter of the model classifier during the offline training and the basis matrix to guide the community search during the online searching. 

\myparagraphunder{Offline training} In the training phase, the model clusters nodes from the same community into a subspace, allowing the trained SSF to function as a basis matrix representing all community subspaces.
As illustrated in \autoref{fig:ssf_loss}(a), SSF represented by $\bm{S}\in \mathbb{R}^{s\times c}$ is a sparse matrix with elements drawn from a Bernoulli variable.
The black color demonstrates 1 at the underlying position, and the white denotes 0. 
By performing the element-wise product with the classifier matrix $\bm{W_c}$, it gates the weight in the classifier, promoting sparsity. 
Given node embeddings learned by a random model, the gated classifier linearly transforms the node embeddings $\bm{\mathcal{H}}_{s} \in \mathbb{R}^{n\times s}$ into the likelihoods of community affiliation. 
As the columns in the classifier approximate the embeddings for the corresponding community, SSF ensures that each community embedding is related to only a subset of elements in the node embedding.
Hence, SSF projects community embeddings into distinct subspaces.
This design facilitates overlapping structures as a node embedding with full space covering multiple subspaces simultaneously.  

\myparagraphunder{Online searching} During online searching, the underlying columns of SSF can map nodes into the user-selected subspace. 
\autoref{fig:ssf_loss}(b) illustrates how subspace mapping benefits the overlapping community search.
For example, if the target domain is the database (DB), node $v_1$ will be identified as a noise node as it does not have a blue color. 
However, $v_q$ and $v_1$ may appear similar in the full space because they share two common labels, making them indistinguishable.
By projecting all nodes into the subspace representing DB, $v_1$ is positioned far from $v_q$ because the elements of its embedding that do not relate to DB are converted to 0, effectively distinguishing it from nodes within that domain.
This enhances the model's ability to differentiate nodes based on community relevance.

\begin{figure}[t]
  \centering
  \includegraphics[width=0.5\textwidth]{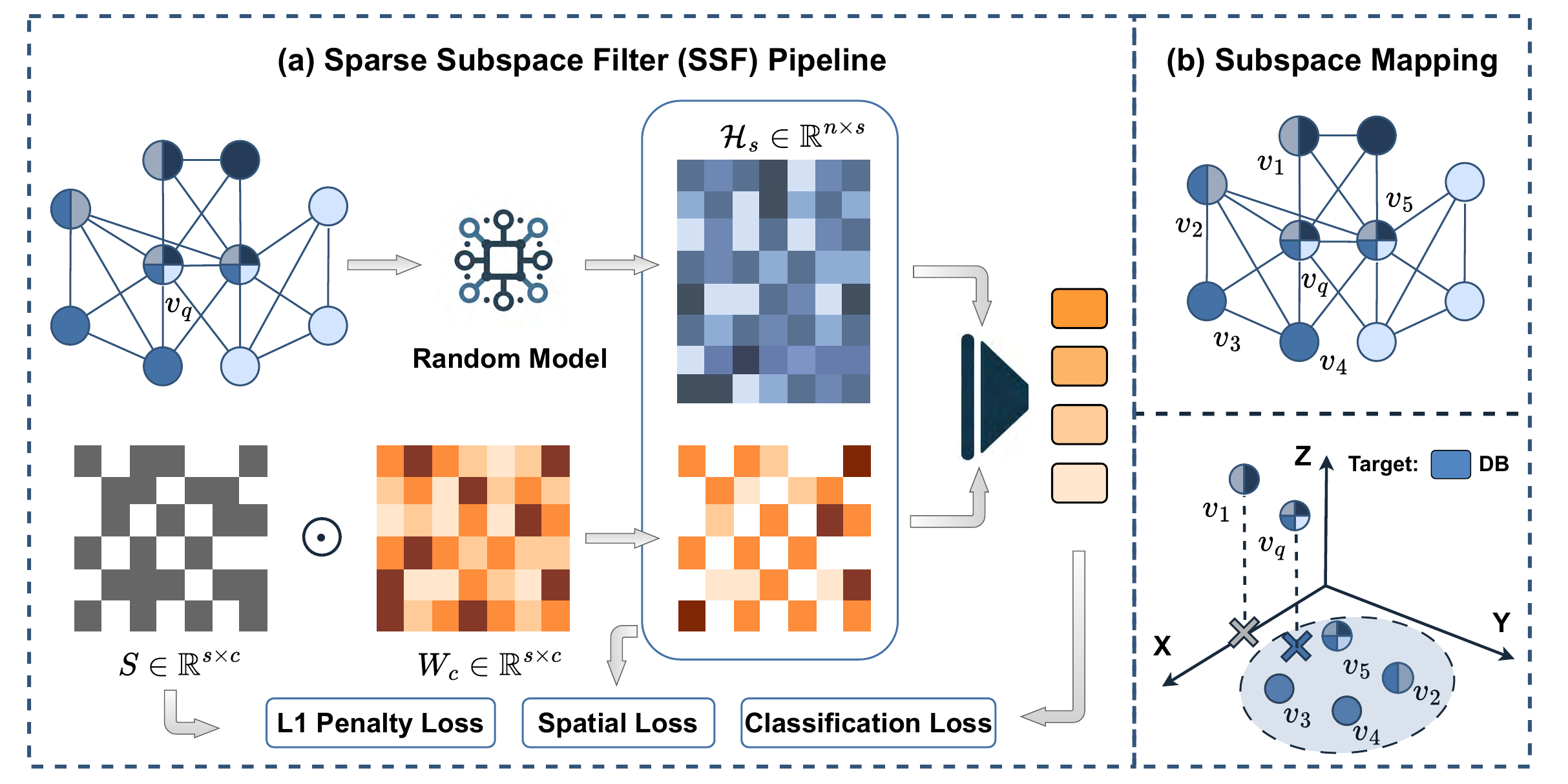}
  \vspace{-5mm}
  \caption{Subspace community embedding via the sparse subspace filter}
  \label{fig:ssf_loss}
\vspace{-2mm}
\end{figure}

\noindent{\textbf{How are the objective functions designed to train SSF?}} To make sure the SSF is well-trained to facilitate the overlapping community search, we adopted three objective functions.

\myparagraphunder{L1-penalty term} To induce sparsity while ensuring the objective function is differentiable, we train SSF as the real-valued parameters.
We then perform a maximum-likelihood (ML) draw by thresholding the values at 0.5 to sparse the SSF. 
The L1-penalty term is stated as \autoref{eq:l1_penalty} to ensure the model is in favor of a sparse SSF.
\begin{equation} \label{eq:l1_penalty}
\hat{\theta}, \hat{\Phi} = \arg \min_{\theta, \Phi} \left( \ell\left(y \mid \theta, \Phi, y \right) + \lambda \left\| \Phi \right\| \right), \ \left\| \Phi \right\| = \sum_{i=1}^{s} \sum_{j=1}^{c_i} S_{i,j},
\end{equation}
\noindent where $\theta$ and $\Phi$ are the parameters that minimize the loss function, and $\Phi$ is the penalty term regulate by $\ell_1$ on elements in SSF. 
The regularization term is scaled by $\lambda$ to control the level of sparsity.
The L1-penalty term ensures that the model is in favor of sparsity, facilitating the model to learn community embeddings falls in different subspaces. 
This ensures that large communities have a loose constraint in estimating node affiliations, suggesting a higher probability of demonstrating high similarity to node embeddings. 

\myparagraphunder{Classification loss} As mentioned in the definition, due to the query in OCS carrying various semantics, the OCS model should first predict the community affiliations, then allow the user to customize their target community.
Hence, we adopt a classification loss to supervise the model performance on community prediction.
Under overlapping structures, as each node denotes more than one community affiliation, the model tends to suffer the positive-negative imbalance issue. 
Where most nodes belong to a small fraction of the possible communities, implying the positive samples will be much less than the negative samples. 
To address this issue, we adopt the ASL loss~\cite{asl_loss_2021} to assign different exponential decay factors to positive and negative samples. 
A general form of a binary loss per label, $\mathcal{L}$, is given by \autoref{eq:class_loss}:
\begin{equation} \label{eq:class_loss}
\mathcal{L}_c = -y \mathcal{L}_+ - (1 - y) \mathcal{L}_-,
\end{equation}
\noindent where $\mathcal{L}_+$ and $\mathcal{L}_-$ are the positive and negative loss parts. 
Comparatively, ASL loss is defined as \autoref{eq:asl_loss}:
\begin{equation} \label{eq:asl_loss}
\begin{cases}
\mathcal{L}_+ = (1 - p)^{\gamma_+} \log(p) \\
\mathcal{L}_- = (p_m)^{\gamma_-} \log(1 - p_m),
\end{cases}
\end{equation}
\noindent where $p_m = \max(p - m, 0)$ is a shifted probability, monitoring $p$ to get $\mathcal{L}_- = 0$ when $p<m$, $(\gamma_+, \gamma_-)$ are focusing parameters for positive and negative samples, respectively. 
ASL balances the contribution from positive and negative samples through a soft threshold $(\gamma_+, \gamma_-)$ and a hard threshold (probability margin $m$). 
As depicted in \autoref{fig:ssf_loss}, the node embeddings $\bm{\mathcal{H}}_s$ perform a matrix multiplication with the gated classifier $\bm{W}_c'$ to generate the output logit. 
The classification loss is calculated by comparing the output logit with the ground truth labels.
This loss effectively groups the node embeddings from the same community.  

\begin{figure*}
  \centering
  \includegraphics[width=\textwidth]{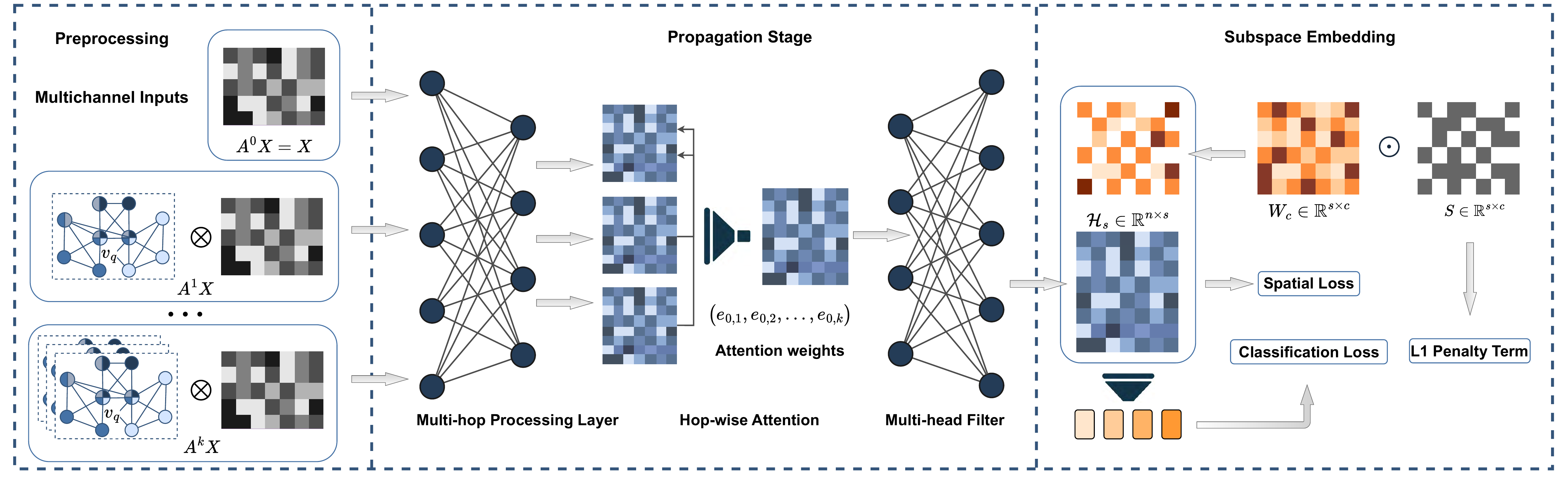}
  \vspace{-5mm}
  \caption{The architecture of SMN}
  \label{fig:SMN}
  \vspace{-3mm}
\end{figure*}

\myparagraphunder{Spatial loss} Furthermore, a spatial loss function is introduced to supervise the subspace mapping, ensuring nodes only fall into the subspaces representing their community affiliations. 
For nodes belonging to a community, their embeddings should be close to their community in the underlying subspace. 
Hence, two distance metrics are employed to monitor it, including Euclidean distance and cosine similarity. 
The loss is primarily generated on the non-zero elements related to each underlying subspace to accommodate overlapping community structures. 
Therefore, we filter node embeddings by the basis vector of each column in SSF before measuring the distance. 
Discrepancies against zero-like elements in SSF are not penalized, given their potential contribution to other communities. 
The sigmoid functions are then applied to the distance and similarity to derive the likelihood of nodes belonging to each community. 
This output is averaged into the final spatial distance as \autoref{eq:10}. 
\begin{equation} \label{eq:10}
\bm{D} = \frac{1}{2} (\sigma(-\text{dist}(\bm{h}_v, \bm{W}'_c)) + \sigma(\text{sim}(\bm{h}_v, \bm{W}'_c)),
\end{equation}
\noindent where ${\sigma}$ represents a sigmoid function, $\bm{W}'_c$ represents the gated model classifier, and $\bm{h}_v \in \bm{\mathcal{H}}_{s}$ is the final embeddings. 
$\text{dist}(\cdot)$ and $\text{sim}(\cdot)$ represent the Euclidean distance and the cosine similarity, respectively.
Similar to the classification loss, we compute the ASL for the spatial loss $\mathcal{\bm{L}_{\text{s}}}$ against the ground truth as \autoref{eq:11}:
\begin{equation} \label{eq:11}
\mathcal{\bm{L}_{\text{s}}} = \sum_{v=1}^{\eta} ASL(\bm{d}_{v},\bm{y}_{v}),
\end{equation}
\noindent Therefore, the final loss function is defined as follows:
\begin{equation} \label{eq:12}
\mathcal{\bm{L}} = \frac{1}{2} (\frac {\mathcal{\bm{L}_{\text{c}}}}{\delta^2_{c}} + \frac {\mathcal{\bm{L}_{\text{s}}}} {\delta^2_{s}}) + \lambda \left\| \Phi \right\| ,
\end{equation}
\noindent where ${\delta}$ is a parameter the model trains to balance the above two loss functions. In the experiments, we observe that this fused loss function stabilizes the model performance.

\section{ Simplified Multi-hop Attention Network (SMN)}\label{sec:SMN}

This section elaborates on the design details of the proposed SMN, which is a lightweight model with large receptive fields.   
We first introduce the overall framework of SMN to establish a comprehensive understanding as depicted in \autoref{fig:SMN}. 
As the subspace community embedding is already illustrated in \autoref{fig:ssf_loss}, in this section, we present the model by mainly focusing on the model preprocessing and propagation phase.

\subsection{Framework}
\autoref{fig:SMN} presents the framework of SMN, which consists of three main components, including preprocessing, propagation, and subspace community embedding.
We prove that the existing widely-used query encoder is not gaining model expressive power but slowing down the training process, detailed analysis disclosed in \Cref{theoretical}.
Hence, we removed the query encoder for model efficiency. 
In addition, we adopted a simplified framework to further improve the training speeds. 
This framework removes activation functions between layers and aggregates multi-hop neighborhood messages during preprocessing instead. 

In the preprocessing stage, SMN generates multichannel inputs by stacking messages from different hops.
The $k$-th channel represents the feature matrix with $(k-1)$ hops awareness.
Aggregating neighborhood information during preprocessing eliminates the need for expensive message-passing during the model propagation.
Hence, the model training speeds are further accelerated.  

The propagation stage consists of three layers: a multi-hop processing layer, a hop-wise attention layer, and a multi-head filter layer.
The multi-hop processing layer inputs the original features from each hop to linearly transform the multi-hop messages. 
The messages are then fused into single-channel messages through a hop-wise attention layer. 
The resulting outputs are further transformed through a multi-head filter layer, yielding the final embeddings.
The hop-wise attention mechanism assigns decaying weights to messages from various hops based on their contributions. 
This design effectively addresses the oversmoothing issue, enhancing model high-order awareness to facilitate overlapping community search. 
The final embeddings are then fed into SSF, learning subspace community embedding for OCS.

\subsection{SMN: Preprocessing and Propagation}

\myparagraph{Preprocessing} The preprocessing stage can be split into aggregation and normalization. 

\myparagraphunder{Aggregation} Inspired by the works~\cite{sgc_2019, lightgcn_2020, selfloop_nips2021}, SMN removes the non-linear activation functions during aggregation to improve the model training speed.
As proved by~\citet{non-linearity_2022}, linear propagation performs similarly to non-linear propagation, especially when graph structures are more informative compared to node attributes. 
A two-layer GCN can be represented as \autoref{eq:2}:
\begin{equation} \label{eq:2}
\bm{Z} = \text{softmax}(\hat{\bm{A}} \times \text{ReLU}(\hat{\bm{A}}\bm{XW}^{(0)})\bm{W}^{(1)}),
\end{equation}
\noindent where $\bm{Z}$ is the final output, softmax is a classifier that maps the probability of nodes belonging to different classes. 
$\text{ReLU}(\cdot)$ is the activation function to provide nonlinearity to the model. 
$\hat{\bm{A}}$ denotes the degree normalized adjacent matrix, $\bm{X}$ is a matrix of node features and $\bm{W}$ is a learnable matrix. 
$\bm{W}^{0} \bm{W}^{1}$ represents the weight for different layers of the networks. 
By removing the activation functions, SMN can be represented as \autoref{eq:3}:
\begin{equation} \label{eq:3}
\begin{split}
\bm{Z} &= \text{softmax}(\hat{\bm{A}} \times (\hat{\bm{A}}\mathbf{X}\bm{W}^{(0)})) \\
&= \text{softmax}(\hat{\bm{A}}^{2}\bm{XW}^{(0)}).
\end{split}
\end{equation}
\myblue{Since the computation of $\hat{\bm{A}}^{2}\bm{X}$ is equal to a preprocessing step, the level of the parameter is the same as a logistic regression model.} 
However, this simplified framework faces an obvious limitation: by aggregating the neighborhood message, the node features will quickly become indistinguishable. 
Therefore, this limits the model to a relatively low-hop awareness and harms OCS. 

Instead of directly using $\hat{\bm{A}}^k\bm{X}$ as input, SMN iteratively stacks adjacency matrices from different hops $\hat{\bm{A}}^0, \hat{\bm{A}}^1,...\hat{\bm{A}}^k$, and generates multichannel inputs by assigning node features such as $\bm{X}, \hat{\bm{A}}\bm{X},...\hat{\bm{A}}^k\bm{X}$. 
The ${k}$-th channel represents the node feature matrix with a $k-1$ hop receptive field. 
By aggregating neighborhood information during preprocessing, SMN eliminates the necessity of a GCN layer but employs a fully connected layer instead. 

\myparagraphunder{Normalization} In degree normalization, the adjacency matrix of GCN is normalized as $\hat{\bm{A}} = \bar{\bm{D}}^{-\frac{1}{2}}\bar{\bm{A}}\bar{\bm{D}}^{-\frac{1}{2}}$, where $\bar{\bm{A}} = \bm{A} + \bm{I} $ and $\bm{I}$ is the identity matrix, representing each node in $\bm{A}$ to add a self-loop by $\bm{A} + \bm{I}$ before normalization. 
The self-loop avoids the loss of self-features during aggregation. 
In contrast, SMN specifically removes the self-loop for two reasons: First, SMN takes the input of $\tilde{\bm{A}}^0\bm{X}, \tilde{\bm{A}}^1\bm{X},...\tilde{\bm{A}}^k\bm{X}$, where $\tilde{\bm{A}}^0\bm{X} = \bm{X}$ is the initial features matrix which prevents the loss of the self-features. 
Furthermore, removing the self-loop reduces redundancy during message passing and further differentiates messages collected from each hop. 
Thus, degree normalization in SMN is depicted in \autoref{eq:4}:
\begin{equation} \label{eq:4}
\tilde{\bm{A}} = \bm{D}^{-\frac{1}{2}}\bm{A}\bm{D}^{-\frac{1}{2}},
\end{equation}
\noindent where ${\bm{A}}$ and $\bm{D}$ represent the adjacency and the degree matrices without self-loop. 

\begin{figure}[t]
  \vspace{-3mm}
  \centering
  \includegraphics[width=0.5\textwidth]{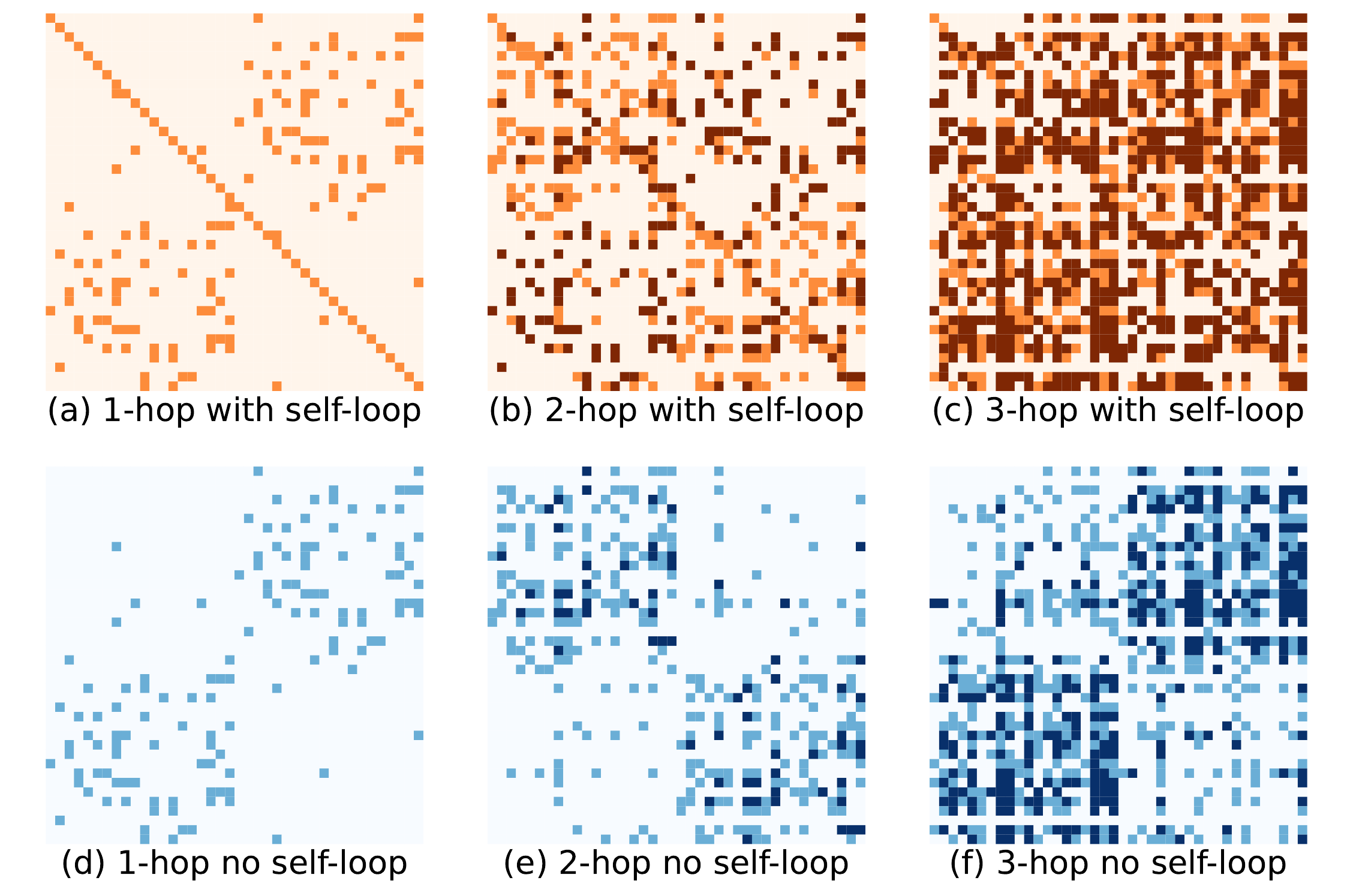}
\vspace{-6mm}
  \caption{Self-loop oversmooth messages received}
  \label{fig:self_loop}
\vspace{-2mm}
\end{figure}

\autoref{fig:self_loop} illustrates the comparison of adjacency matrices with and without self-loops. 
It shows that adding self-loops leads to a notable acceleration in graph exploration, causing oversmoothing within three hops.
Adversely, removing self-loops leads to a better contrast across the adjacency matrices among various hops.
It can be seen that the \autoref{fig:self_loop}(d) (1-hop no self-loops matrix) primarily focuses on direct neighbors, presented as the matrix's top-right and bottom-left corners. 
In contrast, the 2-hop matrix (\autoref{fig:self_loop}(e) emphasizes neighbors with a 2-hop distance (top-left and bottom-right corners), ignoring the 1-hop neighborhood. 
By comparing the \autoref{fig:self_loop}(c) and 5(f), removing self-loops effectively slows down the oversmoothing progress while enabling the proposed attention mechanism to capture unique patterns from various hops.

\myparagraph{Propagation: hop-wise attention} The hop-wise multi-head attention mechanism regulates the aggregation of messages from various hops, enabling the model to capture higher-order patterns while mitigating the oversmoothing effect.
SMN first applies a multi-hop processing layer to transform the initial features linearly to obtain sufficient expressive power. 
Here, the weight matrix $\bm{W}^l$ is shared across nodes and hops. 
We then perform self-attention on the hidden state to compute the attention coefficients as \autoref{eq:5}:
\begin{equation} \label{eq:5}
\begin{split}
\bm{e_i} &= {a}(\bm{W}^l\bm{H}_0, \bm{W}^l\bm{H}_i), \quad \forall i \in [0..k] \\
&= ({\overrightarrow{\bm{a}}}^T\bm{W}^l\bm{H}_0 + \overrightarrow{\bm{a}}^T\bm{W}^l\bm{H}_i) ,
\end{split}
\end{equation}
where $\bm{e_i}$ indicates the importance of $\bm{H}_i$ the ith-hop features toward $\bm{H}_0$ the zero-hop  (self-features matrix). $\overrightarrow{\bm{a}}$ is a shared attention mechanism $\overrightarrow{\bm{a}}\in \mathbb{\bm{R}}^{d'}$, and $k$ is the number of hops. 
\myblue{To fuse the message from different hops, the coefficients are first activated by a LeakyReLU~\cite{gat_2017}, which improves stability by allowing a small gradient for negative inputs, and then normalized by the softmax as \autoref{eq:6}.}
The obtained final weights $\bm{\alpha}$ weighted sum the multi-hop feature matrices into a single channel. 
\begin{equation} \label{eq:6}
\bm{\alpha}_{i} = \frac{\exp\Big(\text{LeakyReLU}\Big(\bm{e_i}\Big)\Big)}{\sum_{j\in [0..k]} \exp\Big(\text{LeakyReLU}\Big(\bm{e_j}\Big)\Big)}_.
\end{equation}

To further improve the performance, we observe that multi-head attention is beneficial in stabilizing the performance.
Similar to the graph attention networks~\cite{gat_2017}, multiple independent attention mechanisms are applied to the hidden state, and the output of each head is further concatenated into the final output. 
The model uses the multi-head filter $\bm{W}^r$ to fuse the output from different heads into final embeddings as \autoref{eq:7}:
\begin{equation} \label{eq:7}
\bm{\mathcal{H}}_s = \sigma\Big(\bm{W}^{r}\Big(\big\Arrowvert_{i=1}^{I}\sum_{k=0}^{{K}} \bm{\alpha}_{k}^i \bm{W}^l \bm{H}_k\Big)\Big),
\end{equation}
\noindent where $I$ is the number of heads and $K$ is the number of hops. 
The dimension of the final output $\bm{\mathcal{H}}_s$ is a hyper-parameter that matches the dimensions of the subspace community embeddings.

The algorithm for SMN propagation is presented in \autoref{alg:SMN}. 
\begin{algorithm}[t]
	{
            \footnotesize

		\caption{Preprocessing and SMN propagation}\label{alg:SMN}
		\Input{Feature matrix $\bm{X}$, the adjacency matrix $\tilde{\bm{A}}$, the number of hops ${k}$, the sparsity rate ${r}$}
		\Output{Model output $\mathcal{\bm{O}}$, final embeddings $\bm{\mathcal{H}}_{s}$, learned sparse subspace filter $\bm{S}$}
		\State{$\tilde{\bm{A}} = \bm{D}^{-\frac{1}{2}}\bm{AD}^{-\frac{1}{2}}$}
		\State{$\bm{\mathcal{H}} = \{\tilde{\bm{A}}^0\bm{X}, \tilde{\bm{A}}^1\bm{X},...,\tilde{\bm{A}}^{({k}-1)}\bm{X}\}$}
		\ForEach{$\bm{H}_i \in \bm{\mathcal{H}}$}
		{
			\State{$\bm{H}_i =\sigma( \bm{W}^{l}\bm{H}_i) $ }
			\State{$\bm{\alpha}_{i} =\text{softmax}({a}(\bm{H}_0, \bm{H}_i))$ }
        }
        \State{$\bm{\mathcal{H}} = \text{AGG}(\bm{\alpha}_{i} \bm{H}_i), \text{ for } i = 0,...,k-1 $ }
        \State{$ \bm{\mathcal{H}}_{s} = \sigma(\bm{W}^r\bm{\mathcal{H}})$}
        \State{Initialize the subspace filter $\bm{S}$}        
        \State{$\bm{W_c'} = \text{ApplySparsity}(\bm{S}, \bm{W_c})$}
        \State{$\mathcal{\bm{O}} = \bm{\mathcal{H}}_{s} \times \bm{W_c'}$}
        \State{\text{return} $\mathcal{\bm{O}},  \bm{\mathcal{H}}_{s}, \bm{S}$}
	}
\end{algorithm}
Lines 1-2 represent the preprocessing stage, stacking aggregated features from different hops. 
Lines 3-8 describe the model propagation stage. 
The preprocessed features are linearly transformed by $\bm{W}^l$ and then fused by the weight from hop-wise attention. 
The fused hidden state is then transformed by $\bm{W}^r$. 
Lines 9-11 describe the subspace embeddings. 
This hop-wise attention mechanism enhances SMN's flexibility by attending to broader receptive fields, capturing the unique graph structure across different real-life datasets. 
\section{Online Search Phase}
This section provides the design details for extending the current \myblue{ML-based} model's online search phase~\cite{qdgnn,coclep} to OCS.
We then analyze the feasibility of applying the proposed method to OCIS to identify the intersection of multiple targets effectively. 


\subsection{Overlapping Communities Search (OCS)} \label{ocs}

Leveraging the subspace community embedding technique, we first extend the naive top-$k$ similarity search to an OCS named Sub-Topk.
Considering Sub-Topk's limitations, we proposed a spatial-aware algorithm called subspace cohesive community search (Sub-CS). 


\myparagraph{Sub-Topk} We first propose Lemma~\ref{lemma1} states that the classifier $\bm{W}_c'$ gated by SSF approximates community embeddings in overlapping community structures. 
Based on this result, we initiate a similarity-based approach called Sub-Topk to identify a query-dependent community. 
The algorithm takes the query nodes and the test set as input, mapping all nodes to the target subspace by performing an element-wised dot product against the basis vector.

\begin{lemma}[Sparse Classifier Approximates Global Centroid of Communities]\label{lemma1}
Given a set $X_j = \{x_1, x_2, ..., x_{|N_j|}\}$ of node embeddings in $\mathbb{R}^s$ that belong to community $j$, and a classifier vector $w_j \in \bm{W}_c'$. 
Through the learning process, $w_j$ will converges to the centroid $\mu_j$ of community $j$ defined by:
\[
\mu_j = \frac{1}{|N_j|} \sum_{x_i \in X_j} x_i.
\]
\end{lemma}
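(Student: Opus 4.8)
The plan is to derive the result from the first-order optimality conditions of the training objective, concentrating on the single term that directly couples the classifier column $w_j$ to the node embeddings, namely the spatial loss of \autoref{eq:11}. The key observation is that the Euclidean-distance component of the spatial distance in \autoref{eq:10} penalizes $\|x_i - w_j\|$ for every positive node $x_i \in X_j$, so $w_j$ is pulled toward the embeddings of its own community while being repelled from the others. First I would isolate, among all terms of the total loss \autoref{eq:12}, those that depend on $w_j$, and write the effective single-community objective as the sum of the squared Euclidean penalties $\sum_{x_i \in X_j}\|x_i - w_j\|^2$ (the positive-sample contribution surviving the ASL reweighting) together with the classification term inherited from \autoref{eq:class_loss}.

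Next I would compute $\nabla_{w_j}\mathcal{L}$. For the squared-distance part the gradient is $\nabla_{w_j}\sum_{x_i \in X_j}\|x_i - w_j\|^2 = 2\sum_{x_i \in X_j}(w_j - x_i) = 2|N_j|\,(w_j - \mu_j)$, which vanishes exactly at $w_j = \mu_j$. I would then exhibit the gradient-descent dynamics $w_j^{(t+1)} = (1 - 2\eta|N_j|)\,w_j^{(t)} + 2\eta|N_j|\,\mu_j$ and observe that, for any step size $0 < \eta < 1/|N_j|$, this is a contraction with unique fixed point $\mu_j$; hence $w_j^{(t)} \to \mu_j$, establishing convergence to the centroid. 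Because the sparsity induced by SSF confines the support of $w_j$ to the coordinates of community $j$'s subspace, the statement is to be read within that subspace, a restriction I would state explicitly rather than leave implicit.

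The main obstacle is handling the nonlinearities: the sigmoid wrapper in \autoref{eq:10} and the asymmetric focusing and margin of the ASL in \autoref{eq:asl_loss}. A sum of monotone functions of distances need not be minimized at the arithmetic mean (only the \emph{squared} Euclidean penalty is), so I would argue that, in the positive-sample regime that dominates $\mathcal{L}_s$, the ASL reweighting acts as a sample-dependent positive rescaling of each squared-distance term; this leaves the stationary point at a weighted centroid and, under the approximately uniform weighting the sigmoid induces near the optimum, at $\mu_j$ itself. A secondary obstacle is the classification term, whose gradient $\sum_i (p_{ij} - y_{ij})\,x_i$ with $p_{ij} = \sigma(x_i^{\top} w_j)$ does not vanish at $\mu_j$ in general; here I would invoke the trainable loss balancing through $\delta_c$ and $\delta_s$ in \autoref{eq:12} to treat the spatial term as the dominant driver of $w_j$ at convergence, so that the combined stationary point still coincides with the community centroid.
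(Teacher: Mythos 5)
Your proposal is sound at the same (heuristic) level of rigor as the paper's own argument, but it takes a genuinely different route. The paper drives the result entirely through the \emph{classification} loss: it writes the gradient-descent update $w_j \leftarrow w_j + \alpha \sum_{i \in N_j} x_i\,(y_{ij} - \sigma(x_i^{\top} w_j))$, argues that the residuals $\epsilon_{ij} = y_{ij} - \sigma(x_i^{\top} w_j)$ vanish as accuracy improves, and then asserts that the stabilization point is approximately $\frac{1}{|N_j|}\sum_{i \in N_j} x_i$; the final step is a leap, since stationarity of the logistic-loss gradient does not by itself pin $w_j$ at the arithmetic mean (and the paper silently drops the negative-sample contributions). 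You instead drive the result through the \emph{spatial} loss of \autoref{eq:10} and \autoref{eq:11}, where the core computation is exact: the stationary point of $\sum_{x_i \in X_j}\|x_i - w_j\|^2$ is precisely $\mu_j$, and your contraction argument gives genuine convergence for the idealized objective. What your route buys is fidelity to the architecture's design intent (the spatial loss was introduced exactly to pull classifier columns toward member embeddings), an exact fixed point rather than an asserted limit, and an explicit statement of the subspace restriction induced by SSF, which the paper leaves implicit. What it costs is the reduction step: the actual loss is an ASL-reweighted sigmoid of (unsquared) distance and cosine similarity, so the true stationary point is a \emph{weighted} centroid $\sum_i c_i x_i / \sum_i c_i$ with distance-dependent weights $c_i$, and you must additionally argue (via the $\delta_c$, $\delta_s$ balancing in \autoref{eq:12}) that the classification gradient, which does not vanish at $\mu_j$, does not shift the fixed point. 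You flag both gaps honestly, and they are no worse than the paper's own leap; the two arguments are best seen as complementary, yours explaining why the spatial loss anchors $w_j$ at the centroid, the paper's explaining why the classification loss does not pull it away once the residuals are small.
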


\begin{proof}
    The updated rule for using gradient descent is given by:
    \[
    w_j \leftarrow w_j + \alpha \sum_{i \in N_j} x_i \cdot (y_{ij} - \sigma(x_i^\top w_j)),
    \]
    where $y_{ij}$ is the indicator function, $\sigma$ is the activation function, and $\alpha$ is the learning rate.
    As the model learns, the predicted probabilities $\sigma(x_i^\top w_j)$ approach the true class labels $y_{ij}$, reducing the term $y_{ij} - \sigma(x_i^\top w_j)$ to a small error $\epsilon_{ij}$ near zero.
    \(
    \Delta w_j = \alpha \sum_{i \in N_j} x_i \epsilon_{ij},
    \)
    with $\epsilon_{ij}$ trending towards zero as classification accuracy improves.
    Assuming $\epsilon_{ij}$ becomes negligible, the updates to $w_j$ become smaller, stabilizing $w_j$ around a vector that maximizes the sum of projections of $x_i$ on $w_j$. This stabilization point is given by:
    \[
    \lim_{\epsilon_{ij} \to 0} w_j \approx \frac{\sum_{i \in N_j} x_i}{|N_j|} \approx \mu_j.
    \vspace{-3mm}
    \]
    \end{proof}
    \vspace{-2mm}
\myblue{\noindent A well-trained SSF functions as a subspace community embedding, effectively filtering out noisy nodes. 
This subspace mapping segregates the target community from overlapping communities.}

\begin{example}
For a node $v$ with a feature vector \( \mathbf{x} = [0.82, 0.11, -0.69, -1.3, 0.03] \), there are two distinct communities with basis vectors e.g., \( \bm{S}_1 = [1, 0, 1, 1, 0], \bm{S}_2 = [0, 1, 0, 0, 1]\). 
The projections are computed as follows: 
\[{x}_1' = {x} \odot \bm{S}_1 = [0.82, 0, -0.69, -1.3, 0],\]
\[{x}_2' = {x} \odot \bm{S}_2 = [0, 0.11, 0, 0, 0.03].\]
The vector ${x}_1'$ is likely to demonstrate higher similarity with nodes in the subspace defined by $\bm{S}_1$, compared to a lower probability with nodes in subspace $\bm{S}_2$. 
Therefore, the subspace mapping effectively filters out irreverent nodes during the search. 
\end{example}




\myparagraph{Sub-CS} While Sub-Topk effectively segregates unrelated nodes, it operates under an unrealistic assumption that the query node is always at the centroid of the community.
Inspired by the idea of spatial-aware community search~\cite{spatialcs_2015,spatialcs_2018}, we further introduce a subspace-aware community search (Sub-CS) to identify a densely interrelated community in the latent space, allowing a shift in the community centroid.   
The algorithm aims to locate a community with a small ``radius'' in the subspace. 
Subspace cohesiveness implies that the identified community should minimize the community radius in the latent subspace, with distance measured by cosine similarity. 
Sub-CS explores top-k nodes demonstrating the highest similarity to the query as the initial community; by traversing through the dataset following a descending order of node similarity, Sub-CS updates the community to maximize the group similarity. 
Let $v_q$ denote the query node, and $\mathbf{C_0}$ represent the initial community. 
Compute the centroid $\bar{x}$ of $\mathbf{C_0}$ as the mean of the nodes embeddings in $\mathbf{C_0}$, i.e., $\bar{x} = \frac{1}{|\mathbf{C}_0|} \sum_{v \in \mathbf{C}_0}x$, where $x$ is the embedding vector of node $v$.
Subsequently, we evaluate each node $v$ in $\mathbf{C}_0$ against $\bar{x}$ using cosine similarity. 
Replace the least similar node $v'$ with a new encountered node $v_i$ if $v_i$ exhibits higher similarity as \autoref{eq:comm}:
\begin{equation} \label{eq:comm}
\mathbf{C}_{i+1} = 
\begin{cases} 
\mathbf{C}_i \setminus \{v'\} \cup \{v_i\} & \text{if } \cos({x_i}, \bar{x}) > \cos({x'}, \bar{x}), \\
\mathbf{C}_i & \text{otherwise.}
\end{cases}
\end{equation}


This process continues and recomputes the centroid if the community is updated. 
The algorithm terminates if the query node becomes the least similar node or the early stop condition is reached.
The early stop controls the node similarity against the query, preventing the final community from including dissimilar nodes.
In the experiment, we set the threshold at $2 \times k$ nodes, which means that the algorithm will only consider nodes with top-2$k$ similarity. 
The details are illustrated in \autoref{alg:CS}. 
The Lemma \ref{lemma2} states that the community results in a decreasing radius by interactively removing the least similar node measured by cosine similarity.  
\begin{algorithm}[t]
	{
      \footnotesize

		\caption{Cohesive community search (Sub-CS)}\label{alg:CS}
		\Input{Graph $G$, Query ${v_q}$, final embeddings $\bm{\mathcal{H}}_{s}$, learned sparse subspace filter matrix $\bm{S}$,  community size ${k}$,  similarity threshold ${l}$}
		\Output{Community $\mathbf{C}_q$}
		\State{$\bm{\mathcal{H}}_{s} = \text{SubspaceMapping}(\bm{\mathcal{H}}_{s}, \bm{S}_q)$}
		\State{$\mathbf{C}_q = \{{q}\}$}
		\ForEach{${v}$ \text{encountered in} ${G}$ \text{sorted by similarity against} ${q}$}
		{
			\State{Add ${v}$ to $\mathbf{C}_q$ if $|\mathbf{C}_q| < {k} $ }
			\State{$\bar{x}= \text{mean}(\bm{\mathcal{H}}_{s}[i]), i \in \mathbf{C}_q$ }
   			\State{$\bm{P} = \text{sim}(\mathbf{C}_q, \bar{x})$}
   			\State{Find a node ${v'}\in \mathbf{C}_q$ with $\bm{P}[v']$ smallest in $\mathbf{C}_q$}
                \State{\textbf{if} ${v'} = {q}$ \textbf{then} break}
                \State{\textbf{if} $\text{sim}({v}, q) <= {l}$ \textbf{then} break}
                \State{$\bm{p}_v = \text{sim}({v}, \bar{x})$}
                \If{$\bm{p}_v > \bm{P}[v']$}
                    {
                    \State{$\mathbf{C}_q.\text{remove}({v'})$; $\mathbf{C}_q.\text{add}({v})$}
                    \State{$\bar{x} = \text{mean}(\bm{\mathcal{H}}_{s}[i]), i \in \mathbf{C}_q$ }
                    }       
        }
        \State{\text{return} $\mathbf{C}_q$}
	}
\end{algorithm}

\begin{lemma}[The Smallest Radius in Embedding Space]\label{lemma2}
    Let \( \mathbf{C}_{i}=\{x_1, x_2, \ldots, x_m\} \) be a set of points in \( \mathbb{R}^s \) is the embeddings of nodes in a identified community. Let \(\bar{x} = \frac{1}{m} \sum_{i=1}^m x_i\) be the centroid of these nodes. 
    Assume \(x'\in \mathbf{C}_{i}\) is the node with the minimum cosine similarity to \(\bar{x}\), and \(x_{m+1} \notin \mathbf{C}_{i}\), having \(\cos(\bar{x},x_{m+1}) > \cos(\bar{x},x')\). 
    When \(x'\) is replaced by \(x_{m+1}\), resulting in a new centroid \(\bar{x}'\), then:
    \[\sum_{i=1}^{m} \cos(\bar{x}', x_i) > \sum_{i=1}^{m} \cos(\bar{x}, x_i).\]
\end{lemma}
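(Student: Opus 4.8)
The statement is best read as a cohesiveness-monotonicity claim: writing $\Phi(\mathbf{C}) = \sum_{x \in \mathbf{C}} \cos(\bar{c}, x)$ for the total cosine similarity of a community $\mathbf{C}$ to its centroid $\bar{c}$, and reading the left-hand sum over the \emph{updated} community (the set with $x'$ replaced by $x_{m+1}$), the inequality asserts that the swap performed in \autoref{alg:CS} strictly increases $\Phi$, i.e. $\Phi(\mathbf{C}_{i+1}) > \Phi(\mathbf{C}_i)$. Since a larger $\Phi$ corresponds to direction vectors that are more tightly concentrated around the centroid, this is exactly the decreasing-radius property the lemma advertises. The plan is therefore to express $\Phi$ in closed form and show the swap makes it grow.

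The first step is to fix the ambient convention by working with the $L_2$-normalized embeddings $\hat{x} = x / \|x\|$, which is natural here because every comparison in Sub-CS is made through cosine similarity and is scale-invariant. Under this normalization the mean $\bar{x} = \frac{1}{m}\sum_{i} x_i$ is parallel to the resultant vector $U := \sum_{i} x_i$, and a short computation gives the key identity $\Phi(\mathbf{C}) = \hat{\bar{x}} \cdot U = \|U\|$. Thus measuring cohesiveness reduces to measuring the length of the resultant vector, and the whole claim becomes $\|U'\| > \|U\|$, where $U' = U - x' + x_{m+1}$ is the resultant after removing $x'$ and inserting $x_{m+1}$.

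The second step is the algebra. Expanding the difference of squared norms,
\[
\|U'\|^2 - \|U\|^2 = \|x_{m+1} - x'\|^2 + 2\,U\cdot(x_{m+1} - x').
\]
The crucial move is to rewrite the cross term through the same normalization identity: since $U = m\,\bar{x}$ and the vectors are unit, $U \cdot x_{m+1} = \|U\|\cos(\bar{x}, x_{m+1})$ and likewise for $x'$, so $U\cdot(x_{m+1}-x') = \|U\|\bigl(\cos(\bar{x}, x_{m+1}) - \cos(\bar{x}, x')\bigr)$. The hypothesis $\cos(\bar{x}, x_{m+1}) > \cos(\bar{x}, x')$ — which the choice of $x'$ as the least-similar member makes easy to satisfy, although the proof only needs this one pairwise comparison — forces the cross term to be strictly positive, while the squared-distance term is nonnegative; hence $\|U'\|^2 > \|U\|^2$, giving $\|U'\| > \|U\|$, and re-expressing both sides as $\Phi(\mathbf{C}_{i+1})$ and $\Phi(\mathbf{C}_i)$ closes the argument.

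I expect the main obstacle to be the normalization assumption rather than the algebra. The clean identity $\Phi(\mathbf{C}) = \|U\|$ relies on the arithmetic mean coinciding in direction with the cosine-optimal reference, which holds only when the embeddings share a common norm; for vectors of differing magnitudes the Euclidean centroid used by \autoref{alg:CS} is no longer the spherical centroid, the two-line expansion breaks down, and a more delicate perturbation bound (or an explicit normalization) is genuinely required. I would therefore state the unit-norm assumption explicitly at the outset, note that it is consistent with the cosine-similarity-driven search, and thereby reduce the remainder of the proof to the routine expansion above.
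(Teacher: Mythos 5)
Your proof is correct (under the conventions you state explicitly) and takes a genuinely different — and in fact more rigorous — route than the paper's. The paper argues by perturbing the centroid: it writes $\bar{x}' = \bar{x} - \frac{1}{m}x' + \frac{1}{m}x_{m+1}$, introduces the auxiliary vector $\bar{x}^* = \bar{x} - \frac{1}{m}x'$, infers $\cos(\bar{x}^*, x_{m+1}) > \cos(\bar{x}^*, x')$ from the hypothesis (justified only by the remark that $\cos(x',x')=1$ while $\cos(x',x_{m+1})<1$), and then asserts that the similarity of $\bar{x}'$ to \emph{each} $x_i$ increases, summing term by term. Your argument instead collapses the total cohesion to the length of the resultant vector, $\Phi(\mathbf{C}) = \|U\|$ for unit-norm embeddings, and finishes with the expansion $\|U - x' + x_{m+1}\|^2 - \|U\|^2 = \|x_{m+1}-x'\|^2 + 2\|U\|\bigl(\cos(\bar{x},x_{m+1}) - \cos(\bar{x},x')\bigr) > 0$. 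What this buys you: the paper's per-term claim is not actually true term by term (the similarity of $\bar{x}'$ to members near $x'$ can decrease; only the aggregate grows), and its intermediate inference about $\bar{x}^*$ is asserted rather than derived, whereas every step of your reduction is a genuine identity or a direct consequence of the hypothesis. You are also right that your two side conditions are not cosmetic: the inequality genuinely fails without unit-norm embeddings (e.g., take $x_1, x'$ unit vectors at angles $0^\circ$ and $50^\circ$, so $\bar{x}$ sits at $25^\circ$, and let $x_{m+1}$ have huge magnitude at angle $49^\circ$; the hypothesis holds but the new centroid is dragged to $49^\circ$ and the total similarity drops), and it also fails under the literal reading of the left-hand sum as ranging over the original members including $x'$ (two unit vectors at $0^\circ$ and $120^\circ$ with $x_{m+1}$ at $60^\circ$ give $0.866 < 1$). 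The paper's statement and proof are silent on both points and implicitly rely on exactly the conventions you made explicit, so stating the unit-norm assumption and the updated-community reading up front, as you propose, is the right repair.
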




\begin{proof}
    Given \(\bar{x} = \frac{1}{m} \sum_{i=1}^m x_i\) is the centroid, the new centroid after replacing \(x'\) with \(x_{m+1}\) is:
    \(
    \bar{x}' = \bar{x} - \frac{1}{m} x' + \frac{1}{m} x_{m+1}.
    \)
    Define \(\bar{x}^* = \bar{x} - \frac{1}{m} x'\) as the centroid after removing \(x'\). Given \(\cos(\bar{x}, x_{m+1}) > \cos(\bar{x}, x')\), we have:
    \(
    \cos(\bar{x}^* + \frac{1}{m} x', x_{m+1}) > \cos(\bar{x}^* + \frac{1}{m} x', x').
    \)
    Since \(\cos(x', x') = 1\) and for any \(x_{m+1} \neq x'\), \(\cos(x', x_{m+1}) < 1\), it follows that:
    \(
    \cos(\bar{x}^*, x_{m+1}) > \cos(\bar{x}^*, x').
    \)
    As \(\bar{x}^*\) represents all other nodes in the community excluding \(x'\), the similarity of the new centroid \(\bar{x}'\) with each \(x_i\) increases:
    \[
    \cos(\bar{x}', x_i) = \frac{(\bar{x} - \frac{1}{m} x' + \frac{1}{m} x_{m+1}) \cdot x_i}{\|\bar{x} - \frac{1}{m} x' + \frac{1}{m} x_{m+1}\| \|x_i\|}.
    \]
    Summing these, \(\sum_{i=1}^m \cos(\bar{x}', x_i)\) is greater than \(\sum_{i=1}^m \cos(\bar{x}, x_i)\), proving Lemma \ref{lemma2}.
\end{proof}

\myblue{In OCS, cosine similarity, despite not being a metric space, offers key advantages for our task. 
Since nodes often belong to multiple communities, their feature vectors may have smaller values, and the magnitude can vary significantly between popular (multi-community) and less popular nodes. 
By emphasizing directional alignment rather than magnitude, cosine similarity allows us to focus on the structural similarity of nodes within a target community. 
This approach effectively handles variations in node popularity, enabling better identification of nodes aligned with the target community regardless of their overall influence in the network.}


\subsection{Overlapping Communities Intersection Search (OCIS)} \label{ocis}



In OCIS, SMN provides enhanced flexibility to end users by allowing the selection of multiple communities as the target and returning only to their intersection. 
The brute-force approach identifies all target communities and determines their intersection by examining common nodes. 
\myblue{However, this method leads to high computation overhead, as it requires enumerating the entire dataset for community prediction, followed by intersection-finding operations.
This process is equivalent to solving a community detection problem, which becomes time-consuming when handling large graphs, especially when the target result involves only a small subset of nodes.}
Leveraging subspace embedding techniques, SMN efficiently identifies the intersection while avoiding computational wastage. 
The rationale is that nodes in the intersection set should exhibit closer relationships with all community embeddings involved.
Lemma \ref{lemma3} states that if nodes demonstrate high similarity in two subspaces, they will also be similar in their unioned space.  

\begin{lemma}[Cosine Similarity Preserved in Unioned Subspace]\label{lemma3}
Given \( k \) communities, each represented by a distinct subspace \( \bm{S}_i \) where \( i \in \{1, 2, \ldots, k\} \). Define the unioned subspace \( U \) as \( U = \bigcup_{i=1}^{k} \bm{S}_i \). If two nodes \( x_1 \) and \( x_2 \) demonstrate high cosine similarity in each distinct subspace \( \bm{S}_i \), then \( x_1 \) and \( x_2 \) will also demonstrate substantial cosine similarity in the unioned subspace \( U \).
\end{lemma}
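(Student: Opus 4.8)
The plan is to reduce the claim to a statement about coordinate-restricted inner products and then exhibit an exact algebraic decomposition of the union-subspace cosine. Recall from the SSF construction that each $\bm{S}_i$ is a binary mask, so projecting onto subspace $i$ is the element-wise product $x \odot \bm{S}_i$, and because $\bm{S}_i[d]^2 = \bm{S}_i[d]$ the within-subspace cosine depends only on the active coordinates $P_i = \{d : \bm{S}_i[d] = 1\}$. Writing $\langle u,v\rangle_{P} = \sum_{d\in P} u[d]v[d]$ and $\|u\|_P = \sqrt{\langle u,u\rangle_P}$, the hypothesis becomes $c_i := \langle x_1,x_2\rangle_{P_i} / (\|x_1\|_{P_i}\|x_2\|_{P_i}) \ge c_{\min}$ for every $i$, and the goal concerns the cosine over $P_U = \bigcup_i P_i$.

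First I would treat the regime intended by the sparse design, where the masks are (near-)disjoint, exactly as in the worked Example where $\bm{S}_1$ and $\bm{S}_2$ share no active coordinate. Under disjointness $P_U$ is the disjoint union of the $P_i$, so both the union inner product and the squared norms split additively: $\langle x_1,x_2\rangle_{P_U} = \sum_i \langle x_1,x_2\rangle_{P_i}$ and $\|x_\ell\|_{P_U}^2 = \sum_i \|x_\ell\|_{P_i}^2$. Introducing the per-subspace norm profiles $a_i = \|x_1\|_{P_i}$ and $b_i = \|x_2\|_{P_i}$ rewrites each block numerator as $a_i b_i c_i$, so that
\[
\cos_U = \frac{\sum_i a_i b_i c_i}{\sqrt{\sum_i a_i^2}\,\sqrt{\sum_i b_i^2}}.
\]

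The key step is to read this ratio as a product of two interpretable factors. Setting weights $w_i = a_i b_i / \sum_j a_j b_j$, the numerator equals $(\sum_j a_j b_j)\,\bar c$ with $\bar c = \sum_i w_i c_i$ a convex combination of the per-subspace cosines, whence $\bar c \ge \min_i c_i \ge c_{\min}$. Dividing then yields the exact identity $\cos_U = \bar c \cdot \cos(\vec a,\vec b)$, where $\cos(\vec a,\vec b)$ is the cosine between the non-negative profile vectors $\vec a = (a_i)$ and $\vec b = (b_i)$. Since both profiles have non-negative entries, this second factor lies in $[0,1]$ and equals $1$ precisely when the two nodes spread their energy across the subspaces in the same proportions, giving $\cos_U \ge c_{\min}\cdot\cos(\vec a,\vec b)$.

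The main obstacle is controlling $\cos(\vec a,\vec b)$, which the hypothesis alone does not force to be large, so a purely worst-case argument only recovers $\cos_U \ge 0$. I would close this gap by invoking the SSF training objective: the spatial and classification losses pull co-members of the same communities toward a shared community embedding in each active subspace, aligning the norm profiles $\vec a,\vec b$ directionally and driving $\cos(\vec a,\vec b)$ near $1$; stated as a mild assumption of comparable per-subspace energies, this turns the decomposition into the claimed lower bound. A secondary obstacle is relaxing disjointness: when masks overlap, the additive splitting of $\langle\cdot,\cdot\rangle_{P_U}$ and of the squared norms acquires inclusion--exclusion corrections over the shared coordinates, so I would argue that the enforced sparsity of SSF keeps the overlap mass small and that the correction terms, being inner products of already-aligned vectors, only reinforce the numerator, leaving the bound intact up to lower-order terms.
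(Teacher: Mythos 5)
Your decomposition is essentially the same as the paper's: both proofs split the union-space inner product and squared norms additively across the (implicitly disjoint) subspaces, and both use the per-subspace cosine hypothesis to replace each block numerator $\langle x_1,x_2\rangle_{P_i}$ by $a_i b_i$ (the paper writes this as $x_1^{(i)}\cdot x_2^{(i)} \approx \|x_1^{(i)}\|\,\|x_2^{(i)}\|$). The difference is that you carry the computation through exactly, arriving at the identity $\cos_U = \bar c \cdot \cos(\vec a,\vec b)$, whereas the paper closes with the unjustified step $\sum_i \|x_1^{(i)}\|\,\|x_2^{(i)}\| \approx \|x_1^U\|\,\|x_2^U\|$. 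That step is Cauchy--Schwarz run in the wrong direction: the left side is at most the right, with equality only when the norm profiles $\vec a=(\|x_1^{(i)}\|)_i$ and $\vec b=(\|x_2^{(i)}\|)_i$ are proportional --- which is exactly the factor $\cos(\vec a,\vec b)$ you isolate and correctly observe is not controlled by the hypothesis. A two-subspace example makes the gap concrete: with one-dimensional subspaces and $x_1=(1,100)$, $x_2=(100,1)$, every per-subspace cosine equals $1$, yet $\cos_U = 200/10001 \approx 0.02$. So the lemma as stated is false without an additional assumption on how the two nodes distribute their energy across the target subspaces; the paper's proof silently assumes it, while you state it explicitly (comparable per-subspace energies, argued from the SSF training objective) and also handle the non-disjoint-mask case, which the paper's additive splitting of $x_1^U\cdot x_2^U$ quietly requires. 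In short, your proposal recovers everything the paper proves, makes the hidden hypothesis visible, and is the more defensible version of the argument; the only thing to tighten is that your appeal to the training losses to align $\vec a$ and $\vec b$ is a heuristic, and should be packaged as a named assumption of the lemma rather than as a consequence of optimization.
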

\begin{proof}
Let \( x_1^{(i)} \) and \( x_2^{(i)} \) be the representations of \( x_1 \) and \( x_2 \) in the subspace \( \bm{S}_i \). Since \( x_1 \) and \( x_2 \) demonstrate high cosine similarity in subspaces \( \bm{S}_i\), we have:
\(
\cos(x_1^{(i)}, x_2^{(i)}) = \frac{x_1^{(i)} \cdot x_2^{(i)}}{\|x_1^{(i)}\| \|x_2^{(i)}\|} \approx 1 \quad \forall i \in \{1, 2, \ldots, k\}
\).
Since \( x_1^U \cdot x_2^U \) is the sum of the dot products in each subspace \( \bm{S}_i \):
\(
x_1^U \cdot x_2^U = \sum_{i=1}^k x_1^{(i)} \cdot x_2^{(i)}
\).
Given that \( \cos(x_1^{(i)}, x_2^{(i)}) \approx 1 \), we have:
\(
x_1^{(i)} \cdot x_2^{(i)} \approx \|x_1^{(i)}\| \|x_2^{(i)}\| \quad \forall i \in \{1, 2, \ldots, k\}
\).
Therefore:
\[
x_1^U \cdot x_2^U \approx \sum_{i=1}^k \|x_1^{(i)}\| \|x_2^{(i)}\| \approx \|x_1^U\| \|x_2^U\|.
\]
This shows that if \( x_1 \) and \( x_2 \) demonstrate high cosine similarity in each distinct subspace \( \bm{S}_i\), they will also demonstrate substantial high cosine similarity in the unioned subspace \( U\).
\end{proof}

\section{Theoretical Analysis}\label{train & query}
\myparagraph{Analysis of the query-graph encoder}\label{theoretical}
We first present a theoretical analysis explaining why the widely used query-graph encoder frameworks in \myblue{ML-based} approaches~\cite{qdgnn, coclep} cause computational overhead leading to $O(|V|^2)$ training time complexity.

\begin{lemma}[Time Complexity of the Query-Graph Encoder Framework]\label{lemma4}
    Fusing the hidden states of both encoders' output at each layer will lead to $O(|V|^2)$ time complexity in model training.
\end{lemma}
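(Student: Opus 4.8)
The plan is to make the cost of the per-layer fusion precise by contrasting a query-dependent graph encoding with a query-independent one, thereby showing that fusion destroys the amortization that would otherwise keep training linear in $|V|$. First I would fix notation: let $f_g$ denote one propagation layer of the graph encoder, let the graph and query encoders each have $L$ layers, and let them produce graph hidden states $\bm{H}^{(l)}$ (one row per node, so $|V|$ rows) and a query hidden state $\bm{h}_q^{(l)}$ at layer $l$. The framework under analysis applies a fusion operator $\phi$ at \emph{every} layer, i.e. $\bm{H}^{(l)} \leftarrow \phi\big(\bm{H}^{(l)}, \bm{h}_q^{(l)}\big)$, injecting query information into the graph representation before the next propagation step.

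The central step is an induction on $l$ showing that, once fusion occurs at the first layer, every subsequent graph hidden state $\bm{H}^{(l)}$ is a function of the query $q$. Because $\phi$ is a nonlinear combination (e.g. concatenation followed by a learned transform), the dependence on $\bm{h}_q^{(l)}$ cannot be factored out of $\bm{H}^{(l)}$, and the next layer's propagation $\bm{H}^{(l+1)} = f_g(\bm{H}^{(l)})$ inherits it; hence the final node embeddings $\bm{H}^{(L)}$ are query-specific and cannot be precomputed once and reused. I would then count: a single forward (and backward) pass of the graph encoder must touch all $|V|$ nodes and so costs $\Theta(|V|)$ for a sparse graph with constant feature dimension, while training ranges over $\Theta(|V|)$ distinct query nodes, each demanding its own full graph pass. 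The total training time therefore scales as $\Theta(|V|) \cdot \Theta(|V|) = \Theta(|V|^2)$, establishing the claimed $O(|V|^2)$ bound.

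To close the argument I would contrast this with the query-free design adopted by SMN, where the graph encoder runs a single shared pass yielding reusable embeddings $\{\bm{h}_v\}_{v \in V}$, so that training over all queries costs only $O(|V|)$ plus cheap per-query comparisons. The main obstacle is the inductive step: I must rule out \emph{any} caching or algebraic decomposition that could assemble a query-dependent graph state from query-independent precomputed pieces. I expect to handle this by exploiting the nonlinearity of $\phi$ and of the inter-layer activation, showing that the query contribution is entangled with the graph signal at each layer so that no linear superposition separates them; this entanglement is precisely what forces one full graph pass per query and yields the quadratic bound.
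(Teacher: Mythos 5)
Your proposal is correct and its computational core is the same as the paper's: training must visit $\Theta(|V|)$ query nodes, each query forces a full pass of the graph encoder costing $O(|V|)$, and the product gives $O(|V|^2)$. The paper phrases this as a batch-counting argument --- the query encoder alone is $|V|$ batches of size $1$, the graph encoder alone is one batch of size $|V|$, and fusion produces $|V|$ batches of size $(1, |V|)$ --- and simply \emph{asserts} that in the fused framework the graph encoder reruns for every query. Where you genuinely depart from the paper is in justifying that assertion: your induction on layers, showing that per-layer fusion entangles $\bm{h}_q^{(l)}$ into every subsequent graph hidden state so that $\bm{H}^{(L)}$ becomes query-specific and non-reusable, is exactly the step the paper skips. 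This matters, because without ruling out reuse the bound does not follow --- a query-independent graph encoding could be computed once in $O(|V|)$ and shared across all queries (which is precisely the amortization SMN exploits, as you note). What each approach buys: the paper's version is shorter and speaks the operational language of batching, which suffices for readers who accept the framework's implementation as given; your version closes the logical gap and explains \emph{why} the quadratic cost is inherent to per-layer fusion rather than an artifact of a naive implementation. One caveat on your final step: a fully rigorous impossibility claim (``no caching or algebraic decomposition can separate the query contribution'') is hard to establish for arbitrary nonlinear $\phi$, so you should either state it as an assumption on the fusion operator (e.g., $\phi$ is not affine in its second argument and the result genuinely depends on $q$) or restrict the claim to the concrete architectures of QDGNN/COCLEP, where the entanglement is visible by inspection; as written it is a proof sketch rather than a proof, though the nonlinearity argument is the right ingredient.
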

\begin{proof}
    Given that the query encoder processes each node individually, similar to Stochastic Gradient Descent (SGD), the training will involve $|V|$ batches, each sized as 1. 
    The time complexity is $O(1)$ for each batch, and for the entire training set, it is $|V| \times O(1) = O(|V|)$.
    The graph encoder processes the full graph each time with a single batch sized $|V|$.
    The time complexity is $O(|V|)$ for each batch, and for the entire training set, it remains $O(|V|) = 1 \times O(|V|)$.
    When the two encoders are fused, there are $|V|$ batches, each sized as $(1, |V|)$.
    This means that for each node processed by the query encoder, the graph encoder processes the entire graph. 
    Consequently, the total time complexity for each batch increases to $|V|$, which results in $|V| \times O(|V|) = O(|V|^2)$.  
\end{proof}

Hence, the query-graph encoder framework is the primary reason for the slow training issues in existing CS models. 
Moreover, while this framework is claimed to capture both local and global information, it primarily affects the gradient descent optimization. 
Ultimately, this impact averages out over the training, behaving similarly to standard SGD as shown in Lemma~\ref{lemma5}.

\begin{lemma}[The optimization is Equivalence to SGD]\label{lemma5}
Let $\theta$ be the parameter vector of a neural network trained using a query-graph encoder where $Q(v_i)$ adjusts $\theta$ locally for each node $v_i$ and $G(G)$ provides global adjustments based on the entire graph $G$. The cumulative effect of these adjustments over multiple training epochs is equivalent to the effect of Stochastic Gradient Descent (SGD) on $\theta$.
\end{lemma}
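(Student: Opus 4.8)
The plan is to make precise the informal claim that precedes the lemma---that the global, graph-level adjustment ``averages out'' and leaves behind the stochastic dynamics of ordinary SGD. First I would write down the per-node update rule induced by the fused framework. Following the batching structure established in Lemma~\ref{lemma4}, each epoch processes the $|V|$ nodes one at a time, and for node $v_i$ the parameter update splits into a local contribution from the query encoder and a global contribution from the graph encoder:
\begin{equation}
\theta^{(t+1)} = \theta^{(t)} - \eta\big(\nabla_\theta Q(v_i) + \nabla_\theta G(G)\big).
\end{equation}
The key structural observation is that $\nabla_\theta Q(v_i)$ depends on the sampled node $v_i$, whereas $\nabla_\theta G(G)$ is computed from the entire graph and therefore carries no dependence on $i$. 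Thus each step decomposes into a sample-dependent (stochastic) term and a sample-independent (deterministic) term.

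Next I would aggregate these updates over a full epoch. Summing the local terms over all $|V|$ nodes recovers exactly the accumulated per-sample gradients that characterise one epoch of SGD, so that $\sum_i \nabla_\theta Q(v_i)$ equals the full-batch gradient of the local objective up to the usual normalisation. The global terms, being identical across the $|V|$ micro-steps, sum to $|V|\,\nabla_\theta G(G)$---a single deterministic drift applied uniformly. I would then argue that such a uniform, sample-independent drift can be folded into the optimisation either as an additive reparametrisation of the objective or as a rescaling of the effective learning rate, so that it contributes no per-sample variance. Consequently the only source of stochasticity in the combined dynamics is the node sampling in the query encoder, which is precisely the defining mechanism of SGD; the iterates therefore track the same mean trajectory and exhibit the same gradient-noise structure.

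The main obstacle is that $\nabla_\theta G(G)$ is not genuinely constant within an epoch: because $\theta$ is updated after every micro-step, the graph-encoder gradient drifts as the epoch proceeds, so the clean identity summing to $|V|\,\nabla_\theta G(G)$ holds only approximately. To close this gap rigorously I would note that the per-step displacement of $\theta$ is $O(\eta)$, and under a Lipschitz-gradient assumption the induced variation of $\nabla_\theta G(G)$ is controlled by the Lipschitz constant, making the correction to the frozen-global approximation of higher order in $\eta$ and hence negligible in the small-learning-rate limit. Passing to this limit, I would invoke a standard stochastic-approximation argument: both the fused updates and plain SGD are driven by the same expected gradient field with zero-mean martingale noise of the same order, so they share the same limiting dynamics and asymptotic behaviour, which establishes the claimed equivalence.
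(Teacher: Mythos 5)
Your proposal follows essentially the same route as the paper's proof: both decompose each fused update into a sample-dependent local gradient from $Q(v_i)$ and a sample-independent global gradient from $G(G)$, aggregate these over an epoch, and conclude that the cumulative dynamics coincide with SGD's averaged update. If anything, your version is more careful than the paper's own argument, which simply asserts that the fused updates ``average out'' over many iterations and never addresses the within-epoch drift of the global gradient nor formalizes what ``equivalent to SGD'' means---the Lipschitz/small-learning-rate control of that drift and the closing stochastic-approximation step you supply are exactly the details the paper's proof leaves implicit.
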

\begin{proof}
    The local adjustments by $Q(v_i)$ for each node and the global adjustments by $G(G)$ can be formally expressed as:
    \[
    \theta \leftarrow \theta - \eta \left( \nabla L_{\text{local}}(Q(v_i), \theta) + \nabla L_{\text{global}}(G(G), \theta) \right),
    \]
    where $\eta$ is the learning rate, and $\nabla L_{\text{local}}$ and $\nabla L_{\text{global}}$ are the gradients of the loss functions localized to $Q(v_i)$ and globalized to $G(G)$, respectively.
    In traditional SGD, parameter updates are influenced by the gradient of the loss function evaluated at different subsets of the data. Over many iterations, this results in:
    \[
    \theta \leftarrow \theta - \eta \frac{1}{n} \sum_{i=1}^n \nabla L_{\text{local}}(Q(v_i), \theta),
    \]
    where $n$ is the total number of nodes.
    Given the high frequency of updates involving every node $v_i$ and the entire graph $G$, the effects of $Q(v_i)$ and $G(G)$ fusion leading  each parameter update by $Q(v_i)$ is averaged with updates induced by $G(G)$:
    \[
    \theta \leftarrow \theta - \eta \left( \frac{1}{n} \sum_{i=1}^n \nabla L_{\text{local}}(Q(v_i), \theta) + \nabla L_{\text{global}}(G(G), \theta) \right).
    \]
    This aligns with the principle of SGD, which states that the aggregate update is the average of the updates across all data points.
\end{proof}


\myparagraph{SMN time complexity analysis} 
We provide detailed considerations for both model training and community identification, addressing preprocessing and query time complexities, respectively. 
The feature processing adopts $k$-hop operations in the preprocessing stage, contributing $O(|V|^{3} \times k)$.
The subsequent multi-hop processing layer involves $O(|V| \times d \times h)$, where $d$ is the initial feature dimensions, and $h$ is the hidden dimensions.
Notably, for vanilla GNN models, the aggregation process with $O(|V|^{3} \times k)$ happens during the model training. 
This slows down the model training speed due to repetitive propagation and backpropagation operations.
Comparatively, the above preprocessing only operates once before training, avoiding expensive overhead during the training. 
The multi-hop attention introduces $O(|V| \times i \times h)$ complexity, where $i$ is the number of heads. Hop-wise addition and weighted average fusion will take $O(|V|)$. 
As these operations will be run for $k$ time, the total time complexity for the multi-hop processing and multi-hop attention layers is $O(k \times |V| \times h \times (d + i))$. 
The multi-head filter and SSF transformation will take $O(|V| \times h \times s)$ and $O(|V| \times s \times c)$. 
Where $s$ represents the dimensions of SSF and $c$ is the number of communities. The model is trained by t epochs. 
Therefore, the total time complexity for SMN training is $O(|V|^{3} \times k + (|V|\times t \times (h \times k \times (d + i) + s \times (h + c))))$. 

For Sub-Topk, applying the target SSF to map node features and computing cosine similarity against the query node will take $O(|V| \times h)$. 
To get the top $k$ similarity, will take $O(|V| \times log(k))$, where $k$ is the community size. 
Therefore, the total time complexity for Sub-Topk will be $O(|V| \times (h+log(k)))$.

For Sub-CS, applying the target SSF to map node features will take $O(|V| \times h)$. 
To get the top $k$ similarity, will take $O(|V| \times log(k))$, where $k$ is the community size. 
To update the centroid will take $O(k^2 \times h)$. 
Therefore, the total time complexity will be $O((k^2 \times h) + |V| \times (h+log(k)))$.
\section{Experiments} \label{Experiments}
In this section, we conduct experiments on 13 datasets to demonstrate the effectiveness and efficiency of the proposed techniques from 5 perspectives.
We first report the model performance on OCS and OCIS to demonstrate the effectiveness in overlapping community structures. 
Then, we illustrate the efficiency comparison to show the superiority of SMN in both model training and query processing.
Thirdly, we compare the model performance in disjoint datasets to show that even though the SMN and SSF are primarily built for OCS, they can also effectively handle disjoint community search.
Subsequently, we conduct the ablation study to analyze the contribution of each building block. 
Finally, we provide hyper-parameters analysis to reveal insights into model parameters.

\begin{table}[t]
  \caption{\myblue{Dataset statistics}}
  \label{tab:data}
  \vspace{-2mm}
  \footnotesize
  \begin{tabular}{ c|c||c|c|c|c|c|c }
    \toprule
     \multicolumn{2}{c||}{Dataset} & \multicolumn{1}{c}{\text{\# Nodes}} & \multicolumn{1}{|c}{\text{\# Edges}} & \multicolumn{1}{|c}{\text{\# Com}} & \multicolumn{1}{|c}{\text{\# Feat}} & \multicolumn{1}{|c}{\myblue{\text{OR}}} & \multicolumn{1}{|c}{\myblue{\text{MLA}}} \\
    \midrule
    \multirow{9}{*}{Overlap} & FB-0 & 185 & 645 & 3 & 224 & 0.188 & 3 \\
    & FB-107 & 418 & 4,815 & 4 & 576 & 0.02 & 2 \\
    & FB-348 & 207 & 2,716 & 4 & 161 & 0.744 & 4 \\
    & FB-414 & 108 & 954 & 2 & 105 & 0.065 & 2 \\
    & FB-686 & 159 & 1,607 & 6 & 63 & 0.698 & 6 \\
    & Chemistry & 35,409 & 157,358 & 14 & 4,877 & 0.25 & 13 \\
    & CS & 21,957 & 96,750 & 18 & 7,793 & 0.275 & 13 \\
    & Engineering & 14,927 & 49,305 & 16 & 4,839 & 0.272 & 12 \\
    & Medicine & 63,282 & 810,314 & 17 & 5,538 & 0.365 & 16 \\
    \midrule
    \multirow{4}{*}{Disjoint} & Cora & 2,708 & 5,429 & 7 & 1,433 & - & - \\
    & Citeseer & 3,312 & 4,732 & 6 & 3,703 & - & - \\
    & Pubmed & 19,717 & 44,338 & 3 & 500 & - & - \\
    & Reddit & 232,965 & 114M & 41 & 602 & - & - \\
    \bottomrule
  \end{tabular}
\end{table}



\begin{table*}[t]
  \vspace{-3mm}
  \centering
  \caption{\myblue{SMN performance in overlapping community search}}
  \vspace{-3mm}
    \begin{adjustbox}{width=\linewidth}
    \label{tab:performance_overlap}
    \begin{tabular}{c|c||cccccccc|cccccccc|c}
    \toprule
        ~ & \textbf{Task} & \multicolumn{8}{c|}{\textbf{Overlapping Community Search, OCS}} &\multicolumn{8}{c|}{\textbf{Overlapping Communities Intersection Search, OCIS}} &   \\ 
        Metric & Model & \makecell{k-clque} & \makecell{CTC} & \makecell{k-core} & \makecell{ICS \\ GNN} & \makecell{QD \\ GNN} & \makecell{COC \\ LEP} & \makecell{SMN \\ Topk} & \makecell{SMN \\ CS} & \makecell{k-clque} & \makecell{CTC} & \makecell{k-core} & \makecell{ICS \\ GNN} & \makecell{QD \\ GNN} & \makecell{COC \\ LEP} & \makecell{SMN \\ Topk} & \makecell{SMN \\ CS} & \makecell{\myblue{Ave+}}  \\ 
            \midrule
            \midrule
        \multirow{9}{*}{F1} &  FB0 & 0.2478 & 0.2588 & 0.2423 & 0.7058 & 0.6710 & 0.2424 & \underline{0.7427} & \textbf{0.7630} & 0.0572 & 0.0622 & 0.0551 & 0.6122 & 0.5982 & 0.6667 & \underline{0.6547} & \textbf{0.7147} & \myblue{35\%}  \\ 
         &  FB107 & 0.2781 & 0.3024 & 0.2537 & 0.6835 & 0.6361 & - & \underline{0.9035} & \textbf{0.9103} & 0.0712 & 0.0829 & 0.0609 & 0.5127 & 0.5760 & - & \textbf{0.7520} & \underline{0.6520} & \myblue{46\%}  \\ 
         &  FB348 & 0.1543 & 0.1366 & 0.1443 & \underline{0.8041} & 0.7338 & 0.6907 & \textbf{0.8517} & 0.7913 & 0.0916 & 0.0949 & 0.0840 & 0.7508 & 0.7316 & 0.6822 & \textbf{0.8114} & \underline{0.8031} & \myblue{39\%}  \\ 
         &  FB414 & 0.2882 & 0.3119 & 0.2718 & 0.7941 & 0.6923 & 0.7286 & \underline{0.8745} & \textbf{0.9006} & 0.0798 & 0.0907 & 0.0681 & 0.4107 & 0.4813 & 0.2080 & \underline{0.7493} & \textbf{0.7533} & \myblue{45\%}  \\ 
         &  FB686 & 0.0947 & 0.0881 & 0.1013 & 0.6366 & 0.6006 & 0.6512 & \underline{0.6776} & \textbf{0.7075} & 0.0691 & 0.0825 & 0.0615 & 0.4077 & 0.4351 & 0.4201 & \underline{0.4958} & \textbf{0.5966} & \myblue{32\%}  \\ 
         & ENG & 0.0471 & 0.0529 & 0.0553 & 0.6680 & 0.7422 & 0.1530 & \textbf{0.8172} & \underline{0.7618} & 0.0471 & 0.0529 & 0.0553 & 0.6406 & 0.6792 & 0.1659 & \textbf{0.8096} & \underline{0.7973} & \myblue{52\%}  \\ 
         & CS & 0.0395 & 0.0433 & 0.0408 & 0.6187 & 0.5878 & 0.1400 & \textbf{0.8301} & \underline{0.8242} & 0.0395 & 0.0433 & 0.0408 & 0.6426 & 0.6472 & 0.1507 & \underline{0.7383} & \textbf{0.7504} & \myblue{53\% } \\ 
         & CHEM & 0.0594 & 0.0615 & 0.0623 & 0.5732 & 0.6151 & 0.1812 & \textbf{0.8585} & \underline{0.8487} & 0.0594 & 0.0615 & 0.0623 & 0.6047 & 0.6940 & 0.2199 & \underline{0.8734} & \textbf{0.8758} & \myblue{59\%}  \\ 
         & MED & - & 0.0503 & 0.0622 & 0.6630 & 0.5704 & 0.1628 & \underline{0.8416} & \textbf{0.8540} & - & 0.0503 & 0.0622 & 0.6760 & 0.6927 & 0.1651 & \underline{0.8405} & \textbf{0.8514} & \myblue{53\%}  \\ 
        \midrule
        \multirow{9}{*}{JAC} &  FB0 & 0.1972 & 0.2115 & 0.1903 & 0.5446 & 0.5049 & 0.1379 & \underline{0.5907} & \textbf{0.6168} & 0.0559 & 0.0609 & 0.0538 & 0.6022 & 0.5811 & 0.5172 & \underline{0.6500} & \textbf{0.7080} & \myblue{34\%}  \\ 
         &  FB107 & 0.2386 & 0.2768 & 0.2048 & 0.5192 & 0.4664 & - & \underline{0.8783} & \textbf{0.8913} & 0.0709 & 0.0827 & 0.0606 & 0.5113 & 0.5760 & - & \textbf{0.7520} & \underline{0.6520} & \myblue{49\%}  \\ 
         &  FB348 & 0.1116 & 0.0940 & 0.1128 & \underline{0.6724} & 0.5796 & 0.5275 & \textbf{0.7417} & 0.6547 & 0.0874 & 0.0924 & 0.0771 & 0.6649 & 0.6447 & 0.5460 & \textbf{0.7233} & \underline{0.7157} & \myblue{36\%}  \\ 
         &  FB414 & 0.2538 & 0.2931 & 0.2294 & 0.6585 & 0.5294 & 0.5731 & \underline{0.7769} & \textbf{0.8191} & 0.0795 & 0.0903 & 0.0673 & 0.3987 & 0.4680 & 0.2080 & \underline{0.7380} & \textbf{0.7420} & \myblue{45\%}  \\ 
         &  FB686 & 0.0661 & 0.0599 & 0.0728 & 0.4669 & 0.4292 & 0.4828 & \underline{0.5124} & \textbf{0.5474} & 0.0641 & 0.0796 & 0.0554 & 0.3623 & 0.4002 & 0.2793 & \underline{0.4645} & \textbf{0.5597} & \myblue{29\%}  \\ 
         & ENG & 0.0260 & 0.0296 & 0.0311 & 0.5015 & 0.5901 & 0.0828 & \textbf{0.6908} & \underline{0.6152} & 0.0260 & 0.0296 & 0.0311 & 0.6259 & 0.6634 & 0.0917 & \textbf{0.7799} & \underline{0.7659} & \myblue{49\%}  \\ 
         & CS & 0.0224 & 0.0249 & 0.0233 & 0.4479 & 0.4162 & 0.0753 & \textbf{0.7096} & \underline{0.7009} & 0.0224 & 0.0249 & 0.0233 & 0.6124 & 0.6244 & 0.0839 & \underline{0.7101} & \textbf{0.7206} & \myblue{51\%}  \\ 
         & CHEM & 0.0349 & 0.0363 & 0.0369 & 0.4017 & 0.4442 & 0.0996 & \textbf{0.7522} & \underline{0.7372} & 0.0349 & 0.0363 & 0.0369 & 0.5744 & 0.6728 & 0.1298 & \textbf{0.8403} & \underline{0.8392} & \myblue{58\%}  \\ 
         & MED & - & 0.0288 & 0.0368 & 0.4959 & 0.3990 & 0.0886 & \underline{0.7266} & \textbf{0.7453} & - & 0.0288 & 0.0368 & 0.6404 & 0.6472 & 0.0933 & \underline{0.7946} & \textbf{0.8054} & \myblue{52\%}  \\ 
        \midrule
        
        \multirow{9}{*}{{\myblue{NMI}}} &  \myblue{FB0} & \myblue{0.1788} & \myblue{0.1245} & \myblue{0.2069} & \myblue{0.1535} & \myblue{0.2007} & \myblue{0.1029} & \textbf{\myblue{0.3182}} & \underline{\myblue{0.2905}} & \myblue{0.1788} & \myblue{0.1245} & \myblue{0.2069} & \myblue{0.2117} & \myblue{0.2021} & \myblue{0.1673} & \underline{\myblue{0.5212}} & \textbf{\myblue{0.5418}} & \myblue{25\%}  \\ 
         &  \myblue{FB107} & \myblue{0.3790} & \myblue{0.5479} & \myblue{0.2054} & \myblue{0.1590} & \myblue{0.2794} & - & \textbf{\myblue{0.6176}} & \underline{\myblue{0.5937}} & \myblue{0.3790} & \myblue{0.5479} & \myblue{0.2054} & \myblue{0.1554} & \myblue{0.2043} & - & \textbf{\myblue{0.6395}} & \underline{\myblue{0.6197}} & \myblue{31\%}  \\ 
         &  \myblue{FB348} & \myblue{0.3338} & \myblue{0.4700} & \myblue{0.3321} & \myblue{0.4626} & \myblue{0.4155} & \myblue{0.2345} & \underline{\myblue{0.5301}} & \textbf{\myblue{0.6550}} & \myblue{0.3338} & \myblue{0.3380} & \myblue{0.3321} & \myblue{0.2023} & \myblue{0.1760} & \myblue{0.0771} & \textbf{\myblue{0.3829}} & \underline{\myblue{0.3582}} & \myblue{17\%}  \\ 
         &  \myblue{FB414} & \myblue{0.3695} & \myblue{0.4281} & \myblue{0.4250} & \myblue{0.4449} & \myblue{0.3914} & \myblue{0.3189} & \underline{\myblue{0.5669}} & \textbf{\myblue{0.6186}} & \myblue{0.3695} & \myblue{0.4281} & \myblue{0.4250} & \myblue{0.3529} & \myblue{0.4286} & \myblue{0.1375} & \underline{\myblue{0.6318}} & \textbf{\myblue{0.6325}} & \myblue{24\%}  \\
         &  \myblue{FB686} & \myblue{0.2862} & \myblue{0.2790} & \myblue{0.2225} & \myblue{0.2047} & \myblue{0.2864} & \myblue{0.1773} & \textbf{\myblue{0.4040}} & \underline{\myblue{0.3777}} & \myblue{0.2862} & \myblue{0.2790} & \myblue{0.2225} & \myblue{0.2474} & \myblue{0.2662} & \myblue{0.2608} & \textbf{\myblue{0.4723}} & \underline{\myblue{0.4495}} & \myblue{17\%}   \\ 
         &  \myblue{ENG} & \myblue{0.0424} & \myblue{0.0545} & \myblue{0.0687} & \myblue{0.3201} & \myblue{0.4550} & \myblue{0.0325} & \textbf{\myblue{0.5803}} & \underline{\myblue{0.4810}} & \myblue{0.0424} & \myblue{0.0545} & \myblue{0.0687} & \myblue{0.3094} & \myblue{0.4986} & \myblue{0.0333} & \textbf{\myblue{0.7696}} & \underline{\myblue{0.7590}} & \myblue{48\%}  \\ 
         &  \myblue{CS} & - & \myblue{0.0377} & - & \myblue{0.2936} & \myblue{0.2983} & \myblue{0.0097} & \underline{\myblue{0.5954}} & \textbf{\myblue{0.6033}} & - & \myblue{0.0377} & - & \myblue{0.2985} & \myblue{0.4734} & \myblue{0.0047} & \underline{\myblue{0.6891}} & \textbf{\myblue{0.7043}} & \myblue{47\% } \\ 
         & \myblue{CHEM} & \myblue{0.0393} & \myblue{0.0396} & \myblue{0.0411} & \myblue{0.2745} & \myblue{0.2961} & \myblue{0.0297} & \textbf{\myblue{0.6546}} & \underline{\myblue{0.6405}} & \myblue{0.0393} & \myblue{0.0396} & \myblue{0.0411} & \myblue{0.2636} & \myblue{0.4930} & \myblue{0.0107} & \textbf{\myblue{0.7028}} & \underline{\myblue{0.6896}} & \myblue{54\% }  \\ 
         & \myblue{MED} & - & \myblue{0.0556} & \myblue{0.0430} & \myblue{0.3916} & \myblue{0.2744} & \myblue{0.0746} & \underline{\myblue{0.6419}} & \textbf{\myblue{0.6726}} & - & \myblue{0.0556} & \myblue{0.0430} & \myblue{0.3806} & \myblue{0.4606} & \myblue{0.0303} & \underline{\myblue{0.6728}} & \textbf{\myblue{0.6876}} & \myblue{49\%} \\ 

    \bottomrule
    \end{tabular}
    \end{adjustbox}
\end{table*}

\begin{figure*}[t]
\vspace{-7mm}
    \centering
    \begin{subfigure}[b][4cm][c]{0.49\linewidth}
        \centering
        \includegraphics[width=\linewidth]{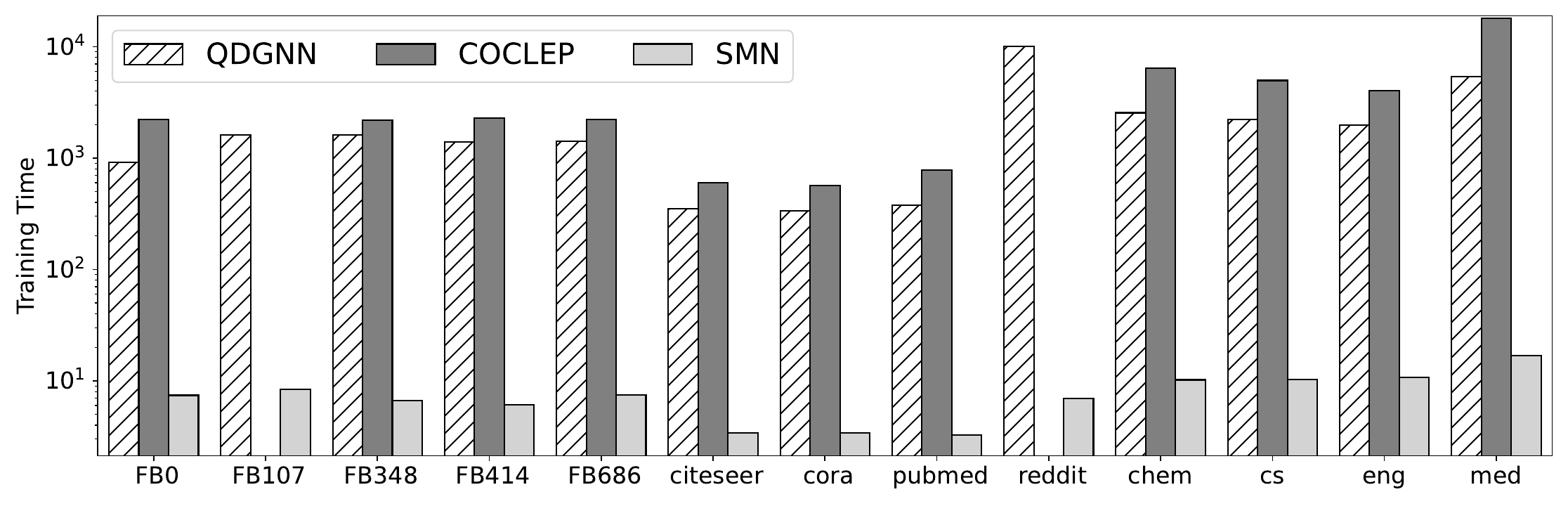}
        \vspace{-5mm}
        \caption{Efficiency results of the training phase}
        \label{fig:efficiency_1}
    \end{subfigure}
    \begin{subfigure}[b][4cm][c]{0.49\linewidth}
        \centering
        \includegraphics[width=\linewidth]{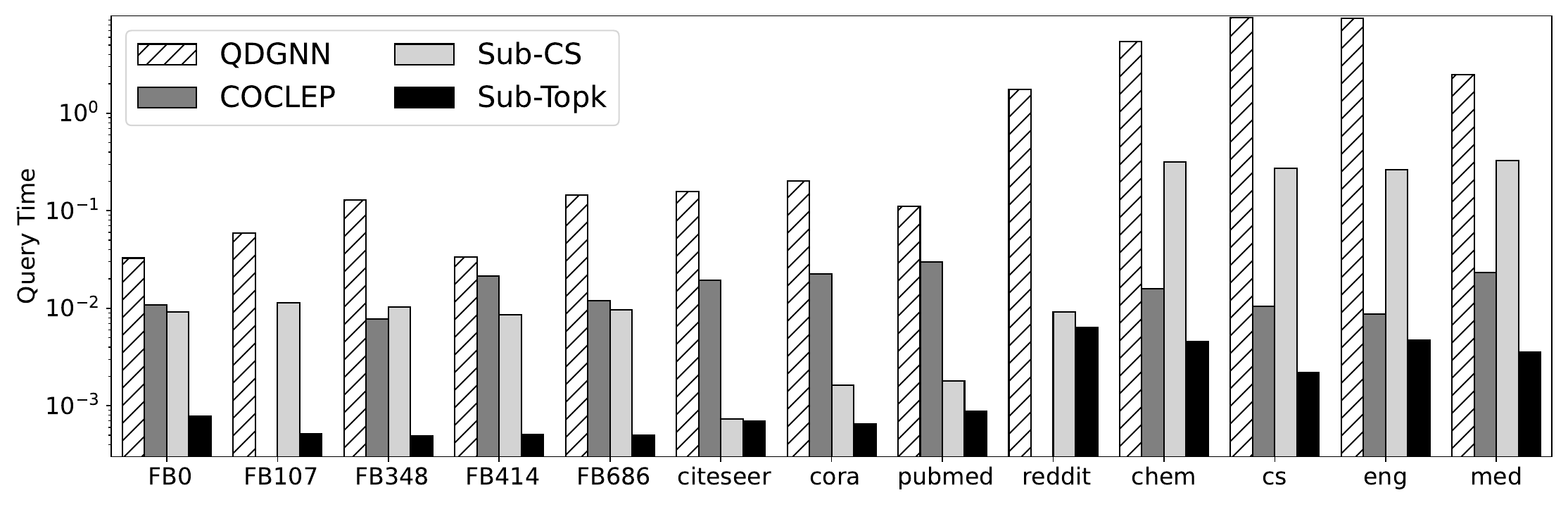}
        \vspace{-5mm}
        \caption{Efficiency results of the query phase}
        \label{fig:efficiency_2}
    \end{subfigure}
        \vspace{-7mm}
    \caption{Efficiency evaluation of different datasets (in seconds)}
        \vspace{-4mm}
\label{fig:efficiency}        
\end{figure*}
\subsection{Experimental Setup}


\myparagraph{Datasets} We use $13$ datasets to evaluate the performance of SMN, including $9$ datasets with overlapping community structures and $4$ datasets demonstrating disjoint structures. 
Datasets statistics are reported in \autoref{tab:data}. 
Facebook~\cite{Facebook} is a social network dataset that contains five ego networks. 
Chemistry, Computer Science, Medicine, and Engineering are co-authorship networks constructed using data from the Microsoft Academic Graph (MAG) \footnote{https://www.microsoft.com/en-us/research/project/open-academic-graph/}.
Cora, Citeseer, and Pubmed are citation networks with details disclosed on Relational Dataset Repository\footnote{https://relational.fit.cvut.cz/}. 
Reddit~\cite{Reddit} is an online forum where nodes are posts, and edges are comments from the same user. 
\myblue{Overlap Ratio ($OR = \frac{1}{n} \sum_{i=1}^{n} \mathbb{I}(|L_i| > 1)$) and Max Label Affiliations ($MLA = \max\left(|L_i|\right)$) measure how overlap the dataset is.}

\myparagraph{Data splitting} \myblue{By following the popular semi-supervised settings~\cite{sgc_2019}, SMN uses $10\%$ or less labeled data during the training to mitigate the human effort on labeling. 
In citation networks, a standard splitting is applied~\cite{lightgcn_2020}, utilizing only 20 samples from each community, accounting for less than $2\%$ of the total data.}
For the Facebook, Reddit, and MAG datasets, the splitting ratio is 10:10:80.
\myblue{The training set is exclusively used to compute the loss and update model parameters. 
During validation, the parameters are frozen, and the test set remains untouched to prevent information leakage.}

\myparagraph{Baseline models} We compare the performance of SMN against three algorithm-based methods ($k$-clique~\cite{k-clique}, CTC~\cite{k-truss2015}, and $k$-core based~\cite{k-core}) and three SOTA GNN-based models (ICS-GNN~\cite{ics-gnn}, QDGNN~\cite{qdgnn}, and COCLEP~\cite{coclep}). 
GNN-based models are all primarily focused on disjoint community structure. 
ICS-GNN shares the same configuration as the proposed SMN, identifying a $k$-sized community. 
QDGNN and COCLEP identify communities by training a threshold to measure GNN score and similarity. 
We extend their configuration to a $k$-sized community search by selecting the top-$k$ nodes with the highest GNN score or similarity.
The hyper-parameters settings are the same as in their original paper.

\myparagraph{Query setting} 
In the experiment, all query nodes are randomly selected to prevent potential bias. 
The community size $k$ is dependent on users, which can differ among datasets. 
In OCS, we set 30 as the community size for Facebook, 150 for Cora, Citeseer and Pubmed, and 1000 for MAG and Reddit.
In OCIS, as the intersection set is smaller in size, we use $k/5$ as the community size, where $k$ is the community size of the underlying dataset in OCS.

\myparagraph{Evaluation metrics} \myblue{The evaluation of identified communities is conducted through two performance metrics: F1-Score~\cite{ics-gnn,qdgnn}, Jaccard similarity (JAC)~\cite{coclep}, and overlapping NMI~\cite{shchur2019overlapping_NMI}.
The F1-Score balances precision and recall, offering a measure of how well the identified community matches the ground truth. 
JAC evaluates the overlap between the predicted and true communities by comparing the intersection over the union of the two sets. 
NMI focuses on the alignment between predicted and true overlapping communities, capturing the amount of shared structural information between them.}
The true data is established as the target community label, with the labels of the identified nodes serving as the predicted data. 
To evaluate the efficiency, the model training time and online querying time are recorded across different models. 
All results are averaged across 50 randomly selected queries to ensure the quality of the evaluation process. 

\myparagraph{Implementation details} SMN is constructed with 16-hop receptive fields, 128 hidden dimensions, and two heads for multi-head attention, using a 64-dimensional SSF. 
Model training involves a learning rate of 0.02 with 100 and 300 epochs for disjoint and overlapping datasets. 
Due to its size, SMN is configured with 4-hop receptive fields for MAG and Reddit, while the other hyper-parameters remain the same.
The experiments are run on a machine with Intel Xeon  6248R CPU, Nvidia A5000 GPU, and 512GB memory.
The code is available at anonymous Github \footnote{https://anonymous.4open.science/r/SMN-86B4/}.

\subsection{Overlapping Community Search}

\begin{table}[t]
  \centering
  \begin{adjustbox}{width=\linewidth}
    \begin{threeparttable}
    \caption{SMN performance (F1-Score) in Disjoint community search}
    \vspace{-2mm}
    \label{tab:disjoint}  
    \begin{tabular}{c||c|c|c|c|c|c|c}
    \toprule
    Datasets & $k$-clque & CTC & $k$-core & ICSGNN & QDGNN & COCLEP & SMN  \\ 
    \midrule
        Cora & 0.2941 & 0.3179 & 0.309 & 0.7787 & 0.7208 & 0.2516 & \textbf{0.8866}  \\ 
        Citeseer & 0.2951 & 0.281 & 0.3081 & 0.7679 & 0.7486 & 0.3944 & \textbf{0.7698}  \\ 
        Pubmed & 0.5121 & 0.5427 & 0.5586 & 0.8065 & 0.7999 & 0.5153 & \textbf{0.8255}  \\ 
        Reddit & - & 0.1271 & 0.2681 & 0.7374 & 0.8273 & - & \textbf{0.9433}  \\ 
    \bottomrule
    \end{tabular}
    \end{threeparttable}
  \end{adjustbox}
  \vspace{-2mm}
\end{table}

\myparagraph{Model effectiveness} \autoref{tab:performance_overlap} illustrates the model performance on overlapping community datasets. 
The comparison is between baseline models, SMN (Sub-Topk), and SMN (Sub-CS) on the Facebook and MAG datasets.
Missing results in $k$-clique and COCLEP are either caused by out-of-memory or failure to assign nodes to any community affiliation.
\myblue{The $Ave+$ represents the average performance improvement compared to the proposed SMN against all baseline models.}
In OCS, instead of selectively choosing a single community as the target, we use each community from the predicted list as the target and report the average performance to avoid bias. 
In OCIS, we use the full label list of the query as the target set and search for nodes that fall in at least all target communities, where more affiliations are not punished. 
Among the 9 datasets, Sub-CS demonstrates the best performance on 5 and 6 datasets for OCS and OCIC, respectively. Sub-Topk achieves the best results for others. 
\myblue{Notably, the algorithm-based models generate comparatively low performance in F1, JAC, and NMI, respectively.
This is mainly caused by the fact that those models are not task-driven and fail to predict the ground truth label.}

\myblue{In the OCS task, the ML-based models exhibit lower performance on the MAG datasets compared to the Facebook datasets.
This discrepancy is primarily due to the size and complexity of the datasets. 
As shown in \autoref{tab:data}, the MAG datasets have relatively lower OR and higher MLA, indicating a higher variance in label distribution.
For instance, in the Chemistry dataset, 75\% of nodes belong to only one community, while the most popular nodes are associated with up to 13 communities.
Despite this significant label variance, the proposed SMN maintained stable and superior performance, achieving $Ave+$ improvement of 59\%, 58\%, and 54\% in F1, JAC, and NMI, respectively.
These results demonstrate the model's robustness and effectiveness in handling overlapping community structures. 
}

\myblue{

When extending to OCIS, we used the query’s full label set as the target community set. In datasets like Chemistry, popular query nodes may be linked to as many as 13 communities, creating a unique challenge that can lead to out-of-sample issues. 
These issues may hinder the model's ability to identify $k$ nodes that all meet the specific criteria. 
As shown in \autoref{tab:data}, baseline models experience a significant performance drop when applied to OCIS, whereas SMN demonstrates relatively stable performance across most datasets, with only a slight decline in F1-Score compared to its performance in OCS. 
This highlights SMN's superiority in handling these more complex scenarios.}

\myblue{
F1 and JAC focus on the exact matching of each node's prediction, while NMI emphasizes the similarity between the predicted community and the ground truth community. 
Despite these differences, the experiment shows similar trends across all three metrics.
On average, our proposed SMN model surpasses the best baseline in OCS by 13.50\% and in OCIS by 13.96\% based on F1-Score, and by 19.19\% and 19.95\% respectively in terms of NMI.
This highlights the effectiveness of SMN in handling diverse community search tasks.}

\myparagraph{Model training efficiency} Figure~\ref{fig:efficiency}(a) depicts model training time comparison.
The reported training time for SMN includes times for both preprocessing and model training.
As ICS-GNN operates in an online setting and is trained and tested on candidate subgraphs, its training time is not included.
Notably, QDGNN and COCLEP exhibit significantly longer training time. 
\myblue{When training for the small Facebook datasets by 300 epochs, QDGNN and COCLEP take over 1,000 seconds, whereas SMN requires less than 10 seconds. 
On the large and densely connected dataset (Reddit), training SMN for 100 epochs takes only 7.6 seconds, compared to QDGNN's 11,055.1 seconds, and COCLEP runs out of memory (OOM).}
These results empirically demonstrate the efficiency of the proposed lightweight SMN and support Lemma~\ref{lemma4}.  
The proposed model framework accelerates model training by 2 orders of magnitude on average and achieves up to 3 orders of magnitude on large graphs such as Reddit and MAG-Medicine.

\myparagraph{Online query efficiency} Figure~\ref{fig:efficiency}(b) shows online query performance. 
QDGNN exhibits slow querying speed due to its reliance on BFS as the backbone algorithm.
Overall, Sub-Topk achieves the best results, surpassing the existing best one by 2 orders.
Sub-CS demonstrates superior efficiency on smaller datasets compared to COCLEP but slows down on larger datasets due to the necessity of recomputing the centroid each time the community is updated, making it sensitive to community size.
The trade-off between Sub-Topk and Sub-CS suggests that Sub-CS is preferable due to its accuracy for complex tasks like OCIS, where the required community size is small. 
However, for the tasks favoring efficiency with large community sizes, Sub-Topk will be preferred.

\begin{figure}[t]
  \vspace{-5mm}
  \centering
  \includegraphics[width=0.5\textwidth]{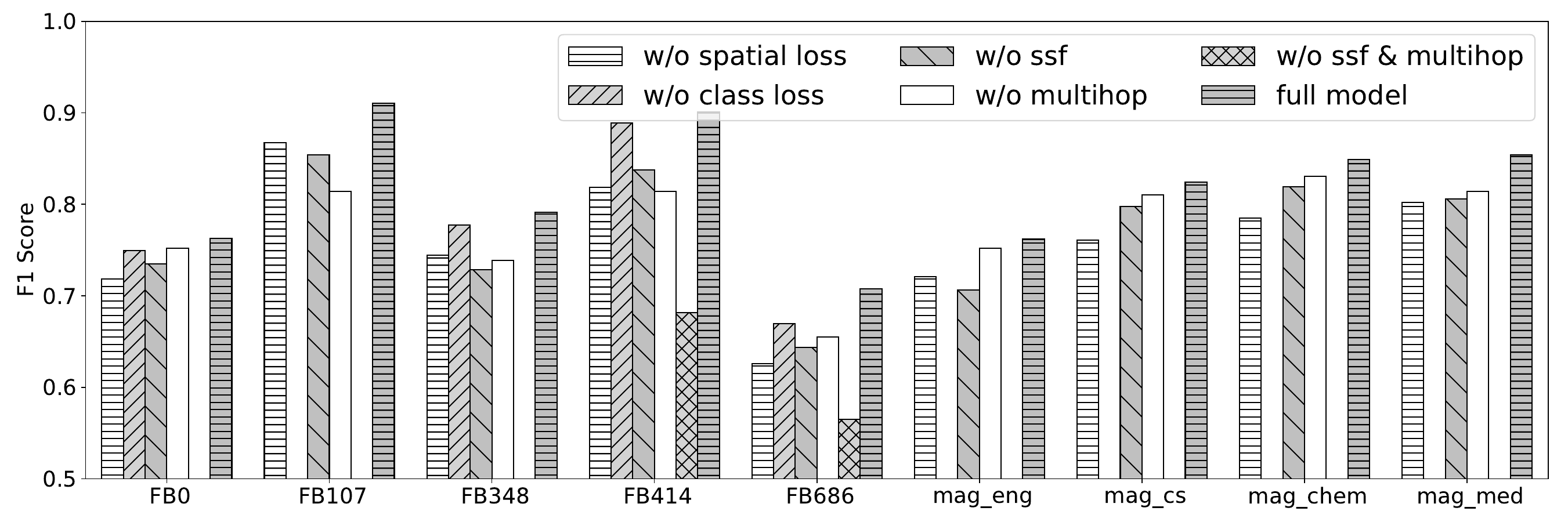}
\vspace{-8mm}
  \caption{Ablation study}
  \label{fig:ablation}
\vspace{-2mm}
\end{figure}

\begin{figure*}[t]
\vspace{-8mm}
    \centering
    \begin{subfigure}[b][4cm][c]{0.33\linewidth}
        \centering
        \includegraphics[width=\linewidth]{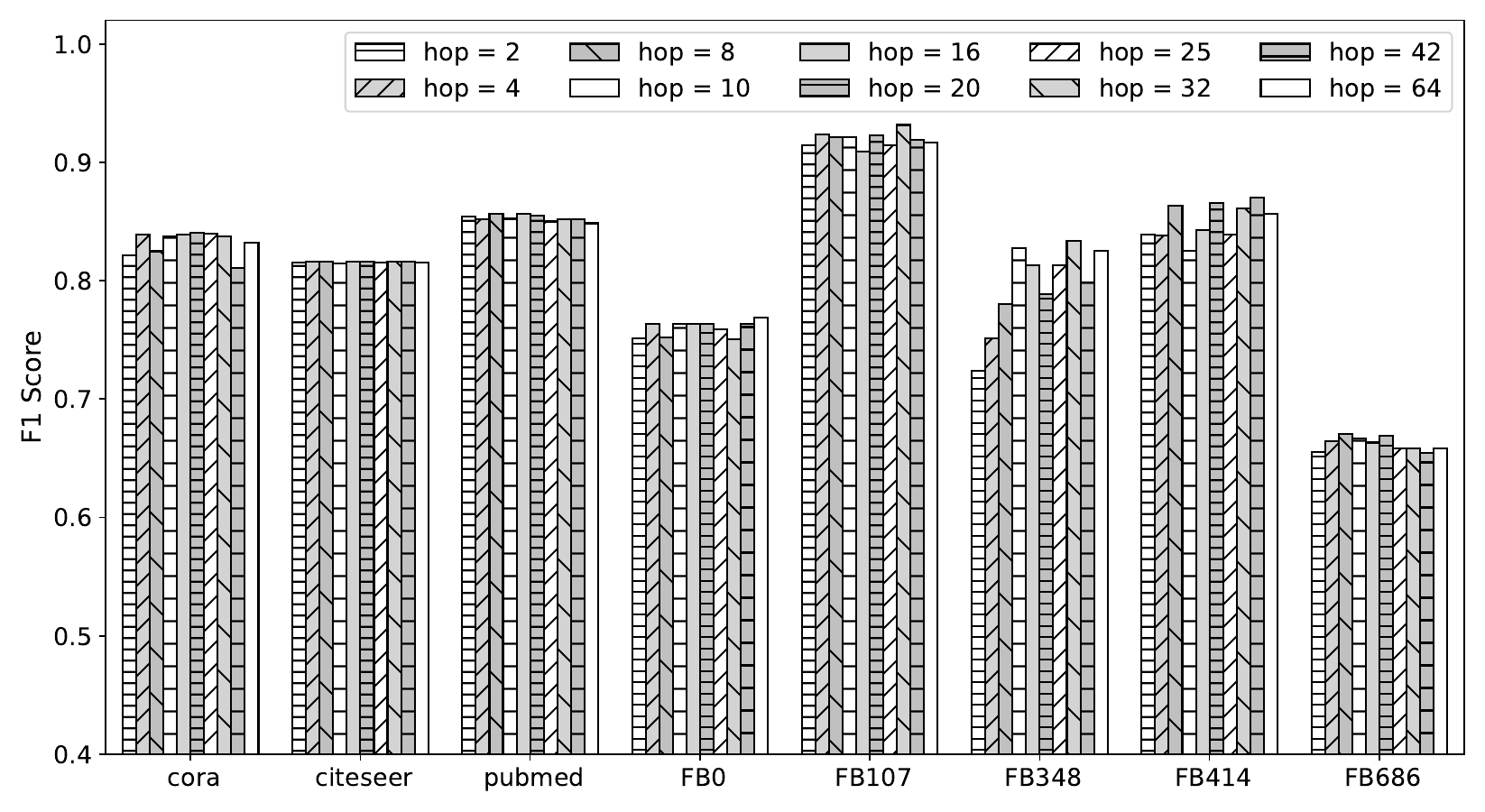}
        \vspace{-6mm}
        \caption{F1-Score by varying hop number}
        \label{fig:chart5}
        \vspace{-6mm}
    \end{subfigure}
    \hfill
    \begin{subfigure}[b][4cm][c]{0.33\linewidth}
        \centering
        \includegraphics[width=\linewidth]{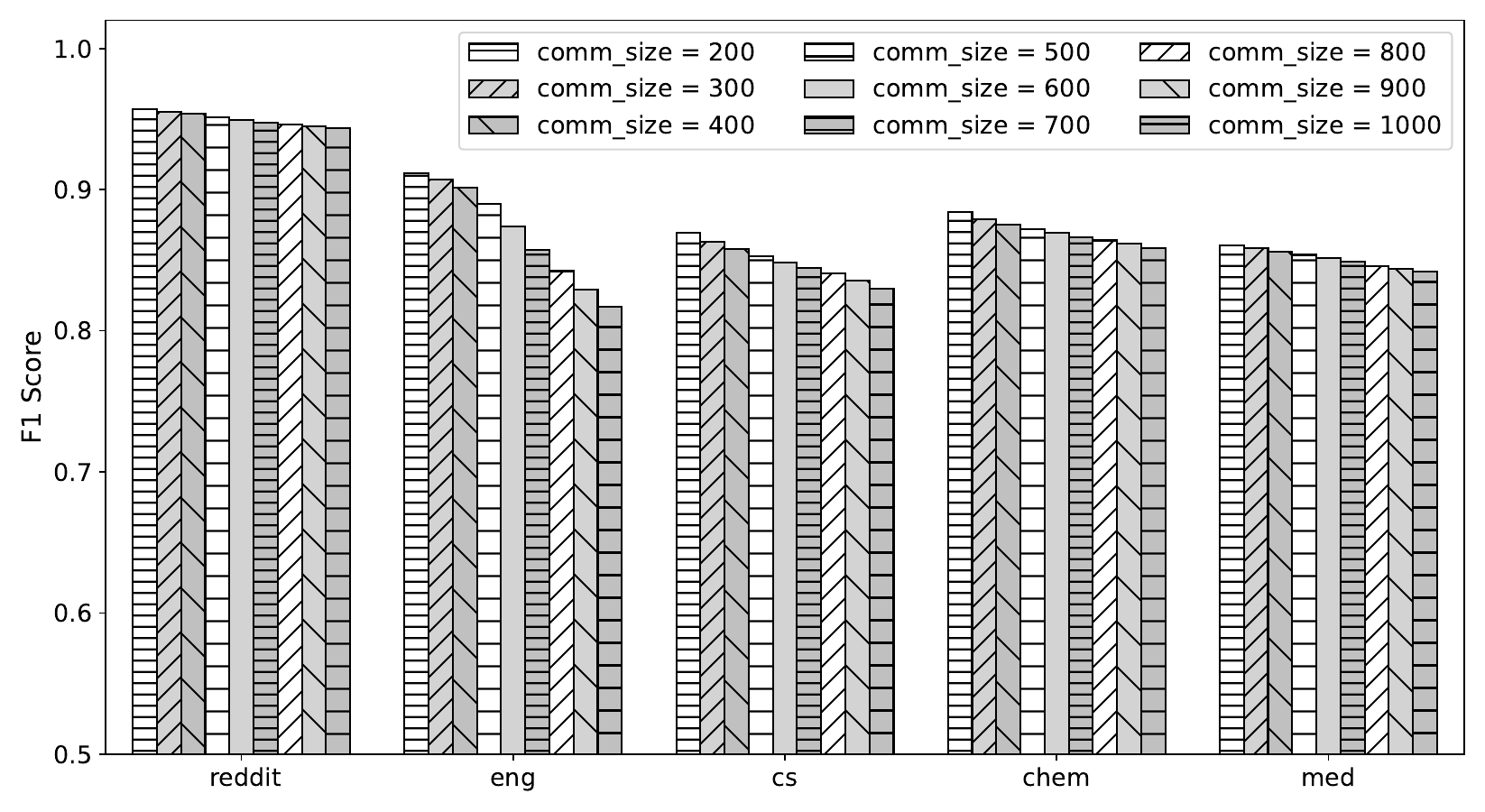}
        \vspace{-6mm}
        \caption{F1-Score by varying community size}
        \label{fig:chart6}
        \vspace{-6mm}
    \end{subfigure}
    \hfill
    \begin{subfigure}[b][4cm][c]{0.33\linewidth}
        \centering
        \includegraphics[width=\linewidth]{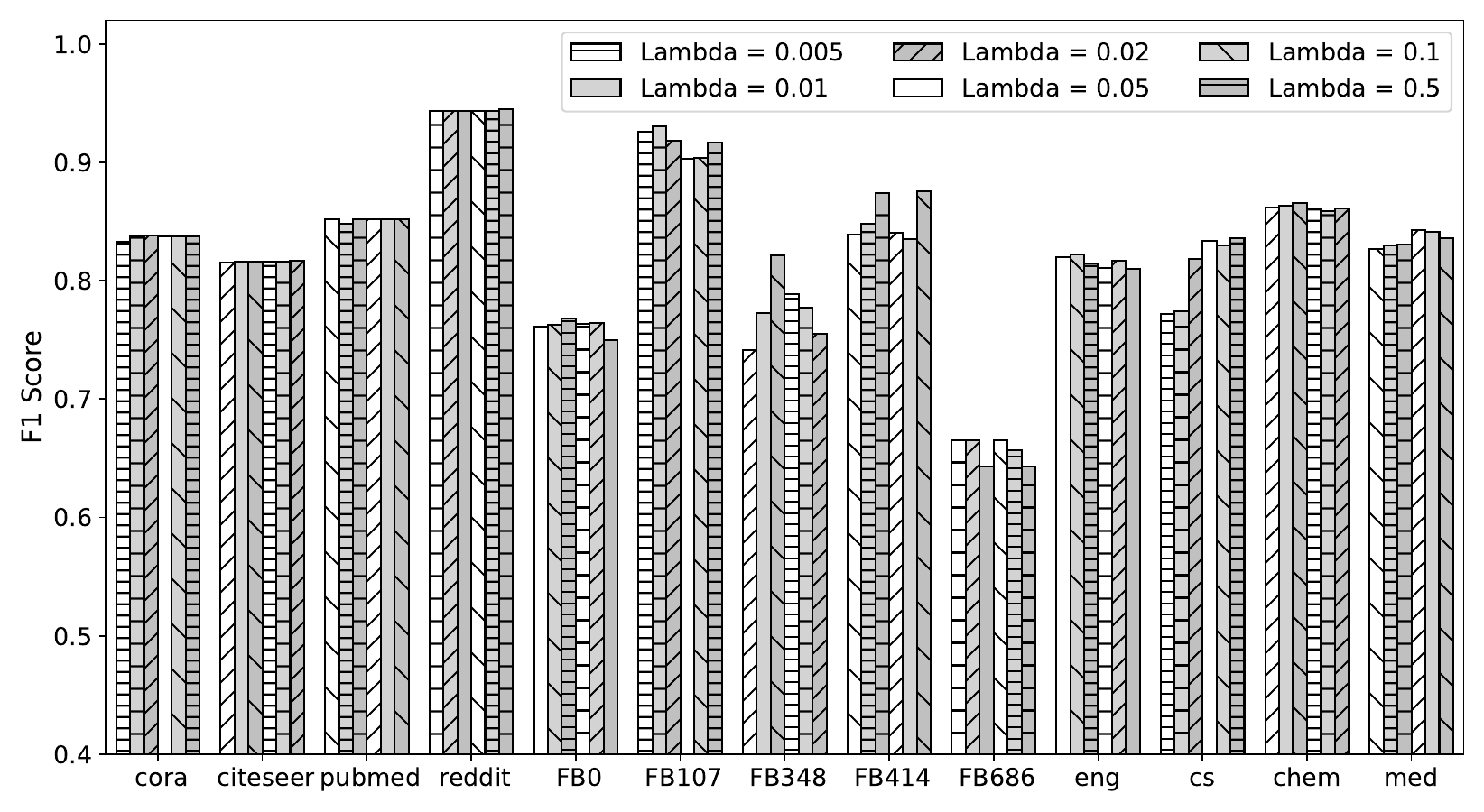}
        \vspace{-6mm}
        \caption{Varying $\lambda$ in the soft sparse filter}
        \label{fig:chart1}
        \vspace{-6mm}
    \end{subfigure}
    \hfill
    \begin{subfigure}[b][4cm][c]{0.33\linewidth}
        \centering
        \includegraphics[width=\linewidth]{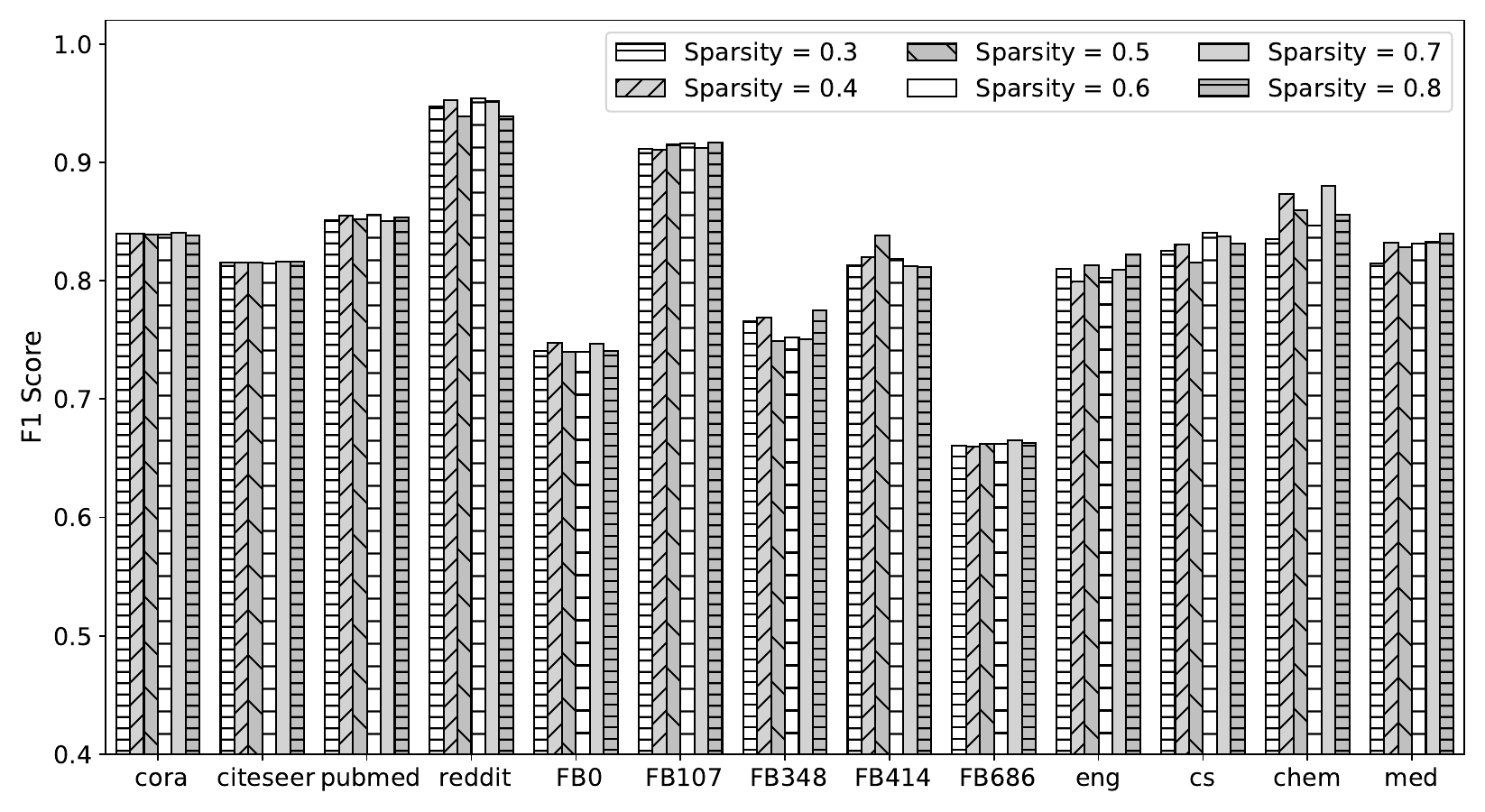}
        \vspace{-6mm}
        \caption{Varying sparsity rate in the hard sparse filter}
        \label{fig:chart2}
    \end{subfigure}
    \hfill
    \begin{subfigure}[b][4cm][c]{0.33\linewidth}
        \centering
        \includegraphics[width=\linewidth]{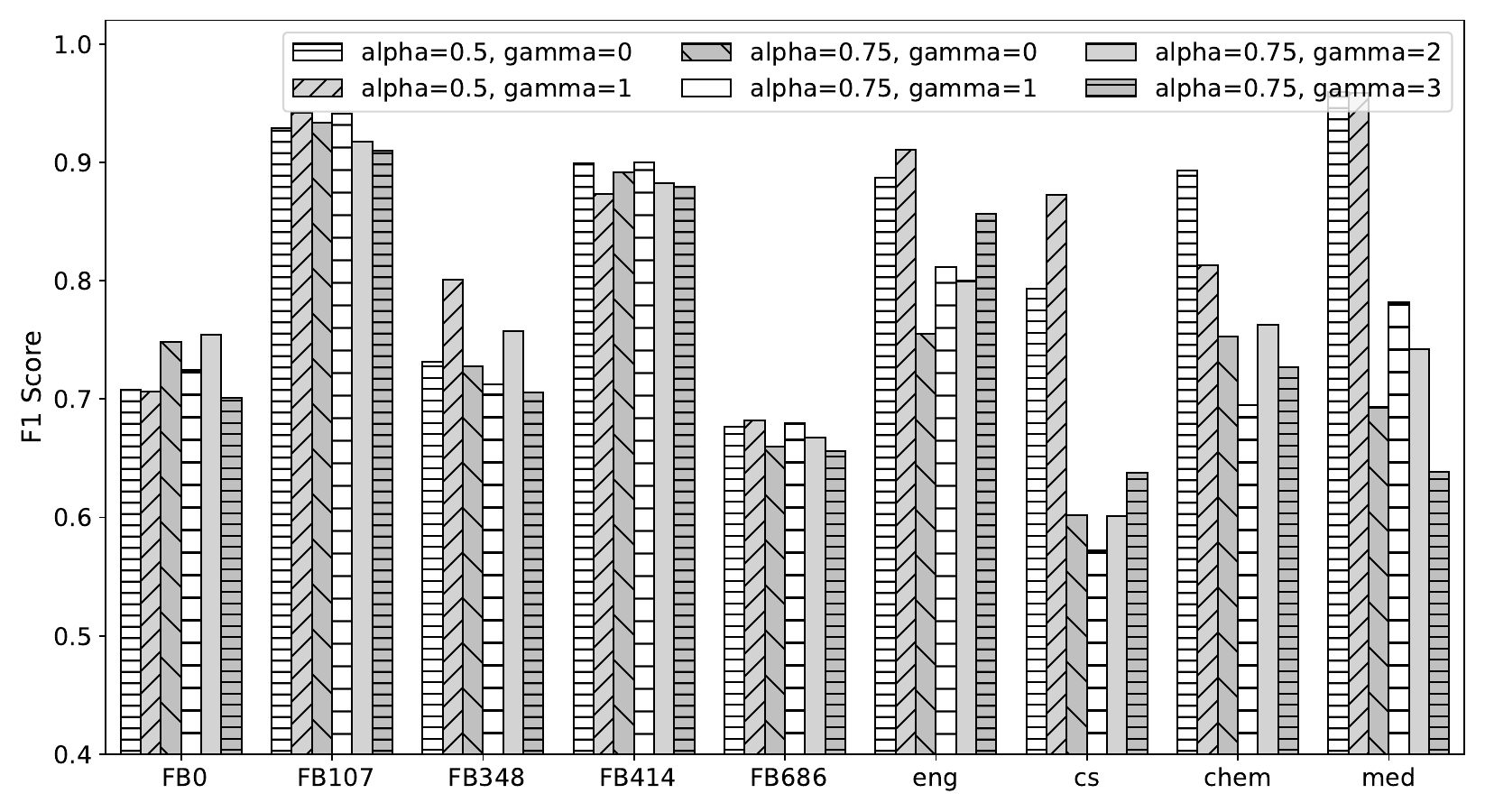}
        \vspace{-6mm}
        \caption{F1-Score by varying $\alpha$ and $\gamma$}
        \label{fig:chart3}
    \end{subfigure}
    \hfill
    \begin{subfigure}[b][4cm][c]{0.33\linewidth}
        \centering
        \includegraphics[width=\linewidth]{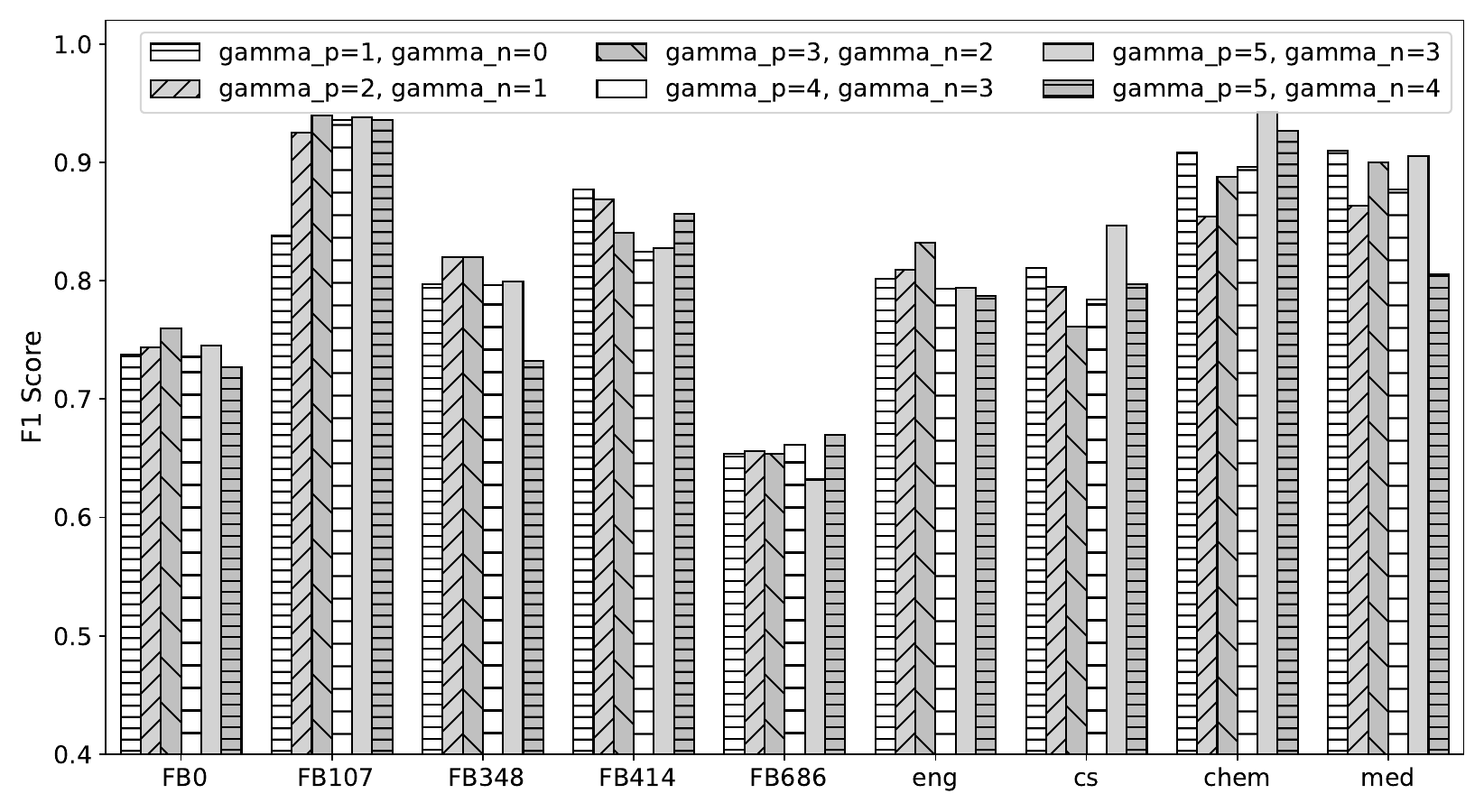}
        \vspace{-6mm}
        \caption{F1-Score by varying $\gamma^+$ and $\gamma^-$}
        \label{fig:chart4}
    \end{subfigure}
    \vspace{-8mm}
    \caption{Hyper-parameter analysis}
    \label{fig:chart}
    \vspace{-5mm}
\end{figure*}

\subsection{Disjoint CS and Ablation Study}
\myparagraph{Disjoint community search} 
\myblue{In this section, we report the model's performance on datasets with disjoint communities to further evaluate its effectiveness.}
The model performance with Sub-CS in disjoint community datasets is presented in \autoref{tab:disjoint}. 
Although SMN is primarily designed for OCS, it outperforms SOTA models in disjoint community search.
SMN consistently achieves superior results compared to all baseline models, showcasing an average improvement of $7.62\%$ across 4 datasets. 
This enhancement is primarily due to the hop-wise attention mechanism and the proposed search algorithm, which leverage high-order patterns captured from a larger model receptive field.
The results prove the effectiveness of SMN in learning representative node embeddings.


\myparagraph{Ablation study}
In this section, we investigate the effectiveness of components employed by SMN and SSF in \autoref{fig:ablation} to illustrate the contribution of each design.
We conduct the ablation study toward three loss functions, SSF, multi-hop attention, and the full model. 
Overall, the full model achieves the most stable and superior performance across 9 datasets, while the model without SSF and the multi-hop attention shows the worst performance (mostly below 0.50 F1-Score). 
We notice that the model without classification loss also performs well in the Facebook datasets. 
This trend is due to Facebook datasets being smaller in size, which makes performing the classification less challenging.
On average, the spatial loss functions and the SSF contribute the most to the model performance, showing improvements of $7.68\%$ and $6.37\%$.

\subsection{Hyper-parameter Analysis}

\myblue{In this section, we conduct various experiments on OCS to test the sensitivity of hyper-parameters.
The study contains six experiments, covering parameters such as the number of hops, community sizes, $\lambda$, $\alpha$, and $\gamma$. 
Hyper-parameters are tuned using grid search.
}

\myparagraph{Varying hop number} In \autoref{fig:chart}(a), we assess the model's performance by varying hop numbers, which determine the model receptive fields. 
\myblue{This experiment evaluates the model's capability of capturing high-order patterns and robustness of oversmoothing.}
SMN demonstrates a stable and slightly increasing trend as the number of hops increases.
Notably, GNN models tend to suffer from oversmoothing, generally limiting the receptive fields to 3 hops.
This proves that SMN benefits from higher-order receptive fields by mitigating the oversmoothing effect.

\myparagraph{Varying community size} In \autoref{fig:chart}(b), we evaluate the effect of varying community sizes on the model's F1-Score. 
The community sizes tested range from 200 to 1000 in increments of 100. 
\myblue{This experiment evaluates the model's sensitivity to community size, where smaller communities represent a less challenging task compared to larger ones.}
The results indicate that the model's performance slightly decreases as the community size increases. 
This demonstrates the model's effectiveness in identifying large communities, as it maintains a high F1-Score even with larger community sizes. 

\myparagraph{Varying $\lambda$ in the soft sparse filter} In \autoref{fig:chart}(c), we evaluate the performance of the model by varying the parameter $\lambda$ in the soft sparse filter. 
\myblue{$\lambda$ is utilized to control the sparsity level in the regulation term in \autoref{eq:l1_penalty}.}
The values of $\lambda$ tested are 0.005, 0.01, 0.02, 0.05, 0.1, and 0.5. 
The figure shows that the impact of $\lambda$ varies across different datasets. 
The model performance is more sensitive to the $\lambda$ values for the datasets with overlapping communities.  


\myparagraph{Varying sparsity rate in the hard sparse filter} \autoref{fig:chart}(d) presents the evaluation results of the model by varying sparsity rates in the hard sparse filter. 
\myblue{Unlike using $\lambda$ to control the sparsity, the hard filter directly zeroes out elements with lower absolute weights in the model classifier based on a predefined sparsity rate.}
Similar to the findings in above, disjoint datasets do not benefit significantly from sparsity adjustments, whereas the optimal sparsity settings in overlapping datasets are highly dataset-specific.

\myparagraph{Varying $\alpha$ and $\gamma$} In \autoref{fig:chart}(e), we analyze the effect of varying $\alpha$ and $\gamma$ on the model's performance. 
\myblue{Here, $\alpha$ controls the balance between different loss components, and $\gamma$ influences the overall weight of the focal losses.}
The results reported for $\alpha$ are 0.5 and 0.75, and for $\gamma$ are 0, 1, 2, and 3. 
This shows that different combinations of $\alpha$ and $\gamma$ yield varying F1-Score across datasets. 

\myparagraph{Varying $\gamma^+$ and $\gamma^-$} \autoref{fig:chart}(f) explores the impact of varying $\gamma^+$ and $\gamma^-$ on the model's F1-Score. 
\myblue{These parameters control the influence of positive and negative samples in the training process as shown in \autoref{eq:asl_loss}.}
The values tested include combinations such as $(\gamma^+ = 1, \gamma^- = 0)$, $(\gamma^+ = 3, \gamma^- = 2)$, and $(\gamma^+ = 5, \gamma^- = 3)$. 
The results indicate that different settings of $\gamma^+$ and $\gamma^-$ can significantly affect performance. 
For example, higher values of both $\gamma^+$ and $\gamma^-$ generally lead to better F1-Score in the MAG-Chem and MAG-CS datasets. 
Conversely, a more balanced setting is preferable in others.


    
\section{Conclusion and Future Work} \label{Conclusions}
This paper studies community search in complex network structures, particularly the challenging domain of overlapping communities (OCS). 
A general solution of OCS named SSF is proposed, accompanied by a Simplified Multi-hop Attention Network (SMN). 
The model enables effective exploration of the overlapping community structure within networks. 
Extensive experiments on 13 real-world datasets prove the superiority of SMN compared to the state-of-the-art approaches across various dimensions, including model effectiveness, training efficiency, and query efficiency.
\myblue{In real-world applications, graphs can be extremely large and evolve over time. 
One potential limitation of our model is its scalability, as computing high-order adjacency matrices is space-intensive. 
Future work could focus on optimizing space complexity and improving scalability for large graphs. 
Additionally, incorporating real-time updates to handle dynamic communities would allow the model to adapt as the network evolves, capturing real-time patterns.}

\bibliographystyle{ACM-Reference-Format}
\bibliography{ref}

\end{document}